\newcommand{\N}{\mathbb{N}}
\DeclareMathOperator{\started}{started}
\tikzstyle{vertex}=[circle, draw, fill=black, inner sep=0pt, minimum width=4pt]
\tikzstyle{edge} = [line width = 1pt]
\authorrunning{H.~L. Bodlaender and M. van der Wegen}
\title{Parameterized Complexity of Scheduling Chains of Jobs with Delays}
\author{Hans L. Bodlaender}{Department of Information and Computing Sciences, Utrecht University,
the Netherlands}{H.L.Bodlaender@uu.nl}{https://orcid.org/0000-0002-9297-3330}{}
\author{Marieke van der Wegen}{Department of Information and Computing Sciences, Utrecht University, the Netherlands
\and 
Mathematical Institute, Utrecht University, the Netherlands}{M.vanderWegen@uu.nl}{https://orcid.org/0000-0003-0899-6925}{}
\date{}
\keywords{Scheduling, parameterized complexity}
\begin{document}
\maketitle

\begin{abstract}
In this paper, we consider the parameterized complexity of the following scheduling problem. We must schedule 
a number of jobs on $m$ machines, where each job has unit length, and the graph of precedence constraints
consists of a set of chains. Each precedence constraint is labelled with an integer that denotes the
exact (or minimum) delay between the jobs. We study different cases; delays can be given in unary and in binary, and
the case that we have a single machine is discussed separately. We consider the complexity of this problem
parameterized by the number of chains, and by the thickness of the instance, which is the maximum
number of chains whose intervals between release date and deadline overlap. 

We show that this scheduling problem with exact delays in unary is $W[t]$-hard for all $t$, when parameterized by the thickness, even when we have a single machine ($m=1$). When parameterized by
the number of chains, this problem is $W[1]$-complete when we have a single or a constant number of
machines, and $W[2]$-complete when the number of machines is a variable. The problem with minimum delays, given in unary, parameterized by the number of chains (and as a simple corollary, also when
parameterized by the thickness) is $W[1]$-hard for a single or a constant number of machines, and
$W[2]$-hard when the number of machines is variable. 

With a dynamic programming algorithm, one can show membership in XP for exact and minimum delays in unary, for any number of machines, when parameterized by thickness or number of chains. For a single machine, with exact delays in binary, parameterized by the number of chains, membership in XP can be shown with branching and solving a system
of difference constraints. For all other cases for delays in binary, membership in XP is open.
\end{abstract}

\newpage

\section{Introduction}
In this paper, we study a problem in the field of parameterized complexity of scheduling problems. Here,
we look at scheduling jobs with precedence constraints with exact or minimum delays, and assume that
jobs have unit length. We study one of the simplest types of precedence constraint graphs: we assume that the precedences form a collection of disjoint chains. Chains have a release date and deadline.
It is not hard to see (by a simple reduction from {\sc 3-Partition}) that this problem is NP-hard, even when all delays are 0. In this paper, we study the parameterized complexity of the problem, and look at two
different parameters: the number of chains, and the thickness of the instance --- that is, the maximum number of chains that have overlapping intervals from release time to deadline. We look at different variants: a constraint gives an exact bound or a lower bound on the delay between successive jobs; we can have one, a constant, or a variable number of machines, and the delays can be given in unary or binary notation.
If delays are given in unary, then each of the studied variants belongs to XP, and is hard for $W[1]$ (or 
classes higher in the $W$-hierarchy.) For one variation (see below), we also show membership in XP when delays are given in binary.
We call the studied problems {\sc Chain Scheduling with Exact Delays} and {\sc Chain Scheduling with Minimum Delays}, for details see Section~\ref{section:preliminaries}.

\subsection{Related literature.} Looking at variants of scheduling problems with special attention to
parameters (like the number of available machines) is a common approach in the rich field of study of
scheduling problems. Studying such parameterizations using techniques and terminology from the field
of parameterized algorithms and complexity was pioneered in 1995 \cite{BodlaenderF95}, but recently
receives growing attention, e.g. \cite{Bessy2019,Fellows2003,Mnich2015}.

The scheduling of chains of jobs (without delays) was studied by Woeginger~\cite{WOEGINGER} and
Bruckner~\cite{Brucker}, who gave respectively a 2-approximation algorithm, and a linear time algorithm
for two machines. General precedence graphs with delays between jobs was studied already in 1992 by Wikum et al.~\cite{wikum1994one}; this was followed by a large body of literature, studying different variations and approaches, including theoretical and experimental studies.

\subsection{A different interpretation}
The problem we studied can also be interpreted as another type of scheduling problem. Now, we let each chain represent {\em one} job: this job only occasionally needs to use the machine (or some specific resource) --- at the other time steps, the job is running but does not need a resource. E.g., if we have a chain with delays 2 and 3, then, we can interpret this as a job that needs a resource at its first, fourth, and eighth time step. The variant with minimum delays now has as interpretation that we allow pre-emption, i.e., jobs can be halted at some time steps and resumed later. 

\subsection{Our results}
\begin{table}[htb]
\begin{center}
\begin{tabular}{|l|l|l|l|}
\hline
     & parameter& exact delays & minimum delays  \\ \hline
    Single machine & thickness & $W[t]$-hard for all $t$     & $W[1]$-hard\\
     & chains & $W[1]$-complete & $W[1]$-hard \\ \hline
    Constant number & thickness & $W[t]$-hard for all $t$ & $W[1]$-hard\\
     of machines & chains & $W[1]$-complete& $W[1]$-hard \\ \hline
    Variable number & thickness & $W[t]$-hard for all $t$ & $W[2]$-hard\\
     of machines & chains & $W[2]$-complete& $W[2]$-hard \\ \hline
\end{tabular}
\end{center}
\caption{Hardness results for different variants of the problem. Exact and minimum delays are given in unary.}\label{table:hardnessresults}
\end{table}

In this paper, we give a number of hardness results, which are
summarized in Table~\ref{table:hardnessresults}. All variants are already hard when the (exact or minimum) delays are given in unary. We also give the following algorithmic results:

\begin{itemize}
    \item With a dynamic programming algorithm, one can show that the {\sc Chain Scheduling with Exact Delays} and {\sc Chain Scheduling with Minimum Delays} belong to XP, when delays are given in unary, and parameterized by either thickness or number of chains, for any number of machines.
    \item Combining branching with solving a set of difference constraints shows XP-membership of {\sc Chain Scheduling with Exact Delays} when parameterized by number of chains, for the case of one machines, when delays are given in binary.
\end{itemize}

For all other cases, the membership in XP when delays are given in binary is open.

\subsection{Organization of this paper}
In Section~\ref{section:preliminaries}, we give a number of preliminary definitions. Section~\ref{section:paramachines} gives hardness proofs for {\sc Chain Scheduling with Exact Delays}
when parameterized by the thickness. The complexity of {\sc Chain Scheduling with Exact Delays} parameterized by the number of chains is established in Section~\ref{section:chains}; a relatively simple modification then gives hardness for the corresponding problems with minimum delays. Section~\ref{sec:XP} 
gives our algorithmic results (membership in XP). Some conclusions are given in Section~\ref{section:conclusions}.

\section{Preliminaries}
\label{section:preliminaries}

We first describe the problems we study in more details. We have a number of identical machines $m$. In the paper, we study separately the cases that we have a single machine ($m=1$), the number of machines is
some fixed constant, or the number of machines is variable. 

On these machines, we must schedule $n$ jobs. Each job has unit length. On the set of jobs, we have
a collection of precedence constraints. Each precedence constraint is an ordered pair of jobs
$(i,i')$: it tells that job $i'$ cannot be started before job $i$ is completed. We say that $i$ is
a {\em direct predecessor} of $i'$, and $i$ is a {\em predecessor} of $i'$ if there is a directed
path from $i$ to $i'$ in the graph formed by the precedence constraints; $i'$ then is a 
{\em successor} of $i$.

The precedence constraints
have associated with them a {\em delay}, denoted $l_{i,i'}$: each delay is a non-negative integer. 
We study two variations of the problem: exact delays and minimum delays. If we consider exact (resp.\ minimum)
delays, then if constraint $(i,i')$ has delay $l_{i,i'}$ then job $i'$ must be started exactly (resp.\ at least) $l_{i,i'}$
time steps after job $i$ was finished. That is: when job $i$ starts at time $t$, then job $i'$ starts 
at time exactly (resp.\ at least) $t+ l_{i,i'} +1$. (Note that jobs run directly after each other, only if the delay is $0$.)
It is allowed to schedule a job on a different machine than its predecessor --- thus, we do not need to specify on what machine a job is running, but only ensure that at each time step, the number of
scheduled jobs is at most the number of available machines.

In this paper, we consider the case that the graph of the precedence constraints consists of a set of
chains. I.e., each job has at most one direct predecessor and at most one direct successor. Chains are the maximal sets of jobs that
are predecessors or successors of each other.

Each chain $C$ has a {\em release date} $r_C$ and a deadline $d_C$. We have that the first job in the chain cannot
start before time $r_C$ and the last job in the chain should be completed at or before time $d_C$.

In this paper, we consider the following two parameterizations of the problem. The first is the
number of chains, denoted by $c$. The second is the {\em thickness}, denoted by $\tau$, defined as follows.
We say that two chains {\em overlap}, when their intervals $[r_C,d_C)$ have a non-empty intersection.
We define the thickness $\tau$ to be the maximum size of a collection of chains that mutually overlap.
That is, for any time $t$, there are at most $\tau$ chains $C$ for which we have that $r_C \leq t$ and $d_C > t$. 

{\sc Chain Scheduling with Exact Delays} is the problem where we are given as input the set of jobs with
chains of precedence constraints, delays for each precedence constraint, release dates and deadlines of chains, and number of machines, and ask whether there exists a schedule that fulfills all the demands:
at each time step, the number of jobs scheduled is at most the number of machines; jobs in a chain
are not scheduled before the release date or after the deadline, and for each precedence constraint $(i,i')$
the delay between $i$ and $i'$ is exactly $l_{i,i'}$.

As said, we study several variants of this problem: delays can be given in unary or binary, the number
of machines can be 1, fixed or variable, and we can parameterize by the number of chains or by thickness.
If we require that the stated delays are lower bounds, we obtain the {\sc Chain Scheduling with Minimum Delays} problem: here, when we have a precedence contraint $(i,i')$ with delay $l_{i,i'}$, we must have that job $i'$ starts at least $l_{i,i'}$ time steps after job $i$ is finished.

\medskip

For the $W[t]$-hardness proofs, we use reductions from the following version of the
{\sc Satisfiability} problem. 
A Boolean formula is said to be {\em $t$-normalized}, if it is the conjunction of the disjunction of the conjunction of \ldots of literals, with $t$ alternations of AND's and OR's. 

The following parameterized problem was considered by Downey and Fellows \cite{DowneyF95}.

\begin{verse}
{\sc Weighted $t$-Normalized Satisfiability}\\
{\bf Given:} A $t$-normalized Boolean formula $F$ and a positive integer $k\in {\N}$. \\
{\bf Parameter:} $k$\\
{\bf Question:} Can $F$ be satisfied by setting exactly $k$ variables to true?
\end{verse}

\begin{theorem}[Downey and Fellows \cite{DowneyF95,DowneyF95II}]
For every $t\geq 2$,
{\sc Weighted $t$-Normalized Satisfiability} is $W[t]$-complete.
\end{theorem}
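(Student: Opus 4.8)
The plan is to prove the two directions of the completeness claim separately, using the circuit characterisation of the $W$-hierarchy. Recall that the \emph{weft} of a Boolean circuit is the maximum number of \emph{large} gates (gates whose fan-in is not bounded by a constant) on any path from an input to the output, and that, by definition, $W[t]$ is the class of parameterized problems fpt-reducible to {\sc Weighted Weft-$t$ Circuit Satisfiability} restricted to circuits of weft at most $t$ and of depth bounded by some fixed constant. It therefore suffices to show: (a) {\sc Weighted $t$-Normalized Satisfiability} reduces to such a bounded-depth weft-$t$ circuit satisfiability problem, and (b) conversely, every bounded-depth weft-$t$ circuit satisfiability problem reduces to {\sc Weighted $t$-Normalized Satisfiability}.

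Direction (a), membership in $W[t]$, is the easy half. Read a $t$-normalized formula $F$ directly as a Boolean circuit: negations sit on the leaves, and the $t$ nested alternating levels of $\bigwedge$'s and $\bigvee$'s become $t$ levels of large AND/OR gates, so every input-to-output path passes through exactly those $t$ large gates. Hence the circuit has weft $t$ and, since $t$ is a fixed constant, depth $O(1)$, and it has literally the same variables and the same satisfying assignments as $F$; in particular an assignment of weight $k$ satisfies one iff it satisfies the other. This polynomial-time mapping is the required fpt-reduction, so {\sc Weighted $t$-Normalized Satisfiability} $\in W[t]$.

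Direction (b), $W[t]$-hardness, is the substantial half: it is the Normalization Theorem of Downey and Fellows, and it is enough to reduce {\sc Weighted Weft-$t$ Depth-$d$ Circuit Satisfiability}, for an arbitrary fixed $d$, to {\sc Weighted $t$-Normalized Satisfiability}. I would proceed as follows. First push all negations to the leaves via De Morgan, so that internal gates are only AND's and OR's. Next note that in between two consecutive large-gate levels (and above the top large level) one finds only bounded-fan-in gates; since $d$ is constant, such a block of bounded-fan-in gates computes a Boolean function of at most a constant number of its inputs, and hence admits both a constant-size CNF and a constant-size DNF representation. For each block choose whichever of the two normal forms has its outer connective matching the type of the large level it feeds into, so that this outer connective merges (by associativity) with that large level and the block is absorbed without creating a new AND/OR alternation; then collapse runs of like-typed large levels and tidy up, arriving at a formula with at most $t$ alternations of AND and OR over literals, i.e.\ a $t$-normalized formula. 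Throughout, the rewriting introduces auxiliary variables (for instance for the outputs of absorbed blocks); these are pinned down by enforcement subformulas forcing their values in terms of the original variables, and the target weight is shifted by the resulting fixed number of forced-true auxiliaries, so that weight-$k$ satisfiability transfers in both directions.

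The step I expect to be the main obstacle is precisely this part of direction (b): keeping the number of AND/OR alternations at most $t$ while all bounded-fan-in levels are removed, and at the same time preserving the set of weight-$k$ satisfying assignments. The awkward configurations are bounded-fan-in blocks sandwiched between two large levels of the same type --- where the obvious normal forms threaten to introduce a spurious alternation --- and the block adjacent to the output gate; handling them requires the careful choices of normal form together with the auxiliary-variable and enforcement gadgetry sketched above, which is the technical core of Downey and Fellows' argument. (The restriction $t\geq 2$ is of the same flavour: for $t\leq 1$ the normalized form is too weak to capture $W[t]$, and a different gadget, such as bounded-width CNF, is used instead.) For the present paper I would simply invoke the argument of \cite{DowneyF95,DowneyF95II} rather than reproduce it.
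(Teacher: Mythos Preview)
The paper does not prove this theorem at all: it is stated purely as a citation of Downey and Fellows \cite{DowneyF95,DowneyF95II}, with no accompanying argument. Your final sentence --- that you would simply invoke the argument of \cite{DowneyF95,DowneyF95II} rather than reproduce it --- is therefore exactly what the paper does, and is the correct treatment here. The preceding sketch of the Normalization Theorem is a reasonable outline of the Downey--Fellows proof, but it is extraneous for the purposes of this paper, which only uses the statement as a black box for the subsequent reductions.
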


For the $W[1]$- and $W[2]$-completeness results, we use reductions from \textsc{Independent Set} and
\textsc{Dominating Set}. It is know that \textsc{Independent Set} is $W[1]$-complete \cite{DowneyF95II} and
\textsc{Dominating Set} is $W[2]$-complete \cite{DowneyF95}.

\section{Parameterization by thickness}
\label{section:paramachines}
In this section, we look at the \textsc{Chain Scheduling with Exact Delays} problem, when parameterized by the thickness $\tau$. We will show, for several variations, that the problem is hard for the class $W[t]$, for all integers $t$.

\subsection{Parallel machines}
We consider the version with $m$ parallel machines, where $m$ is part of the input.
The delays are assumed to be exact. 

We will give a reduction from \textsc{Weighted $t$-Normalized Satisfiablity}. Assume we have a $t$-normalized Boolean formula $F$ and integer $k$. Let $t'$ be the number of `levels' of disjunction.
We assume the variables of $F$
to be $x_0, \ldots, x_{n-1}$. 

We make an instance of the \textsc{Chain Scheduling with Exact Delays} problem, with $m = k+t'$ machines and thickness $\tau = 2k+t'$. 

An {\em element} of the formula is either a literal, or a disjunction or conjunction of smaller elements. We will first assign to each element a size $s$, and then to each element an integer interval. To each element, we 
associate an {\em interval size}; the interval size of formula $F'$ is denoted by $s(F)$.

The {\em interval size} of a literal (i.e., a formula of the form $x_i$ or $\neg x_i$) is 
$2n$. The interval size of a conjunction is the sum of the size of the terms,
i.e., $s(F_1 \wedge F_2 \wedge \cdots \wedge F_q) = \sum_{i=1}^q s(F_i)$.
For each disjunction $F'$ of $q$ terms, its size is $2q+1$ times the maximum size of its terms:
define $s_{\max}(F')= \max_{1\leq i\leq q} s(F_i)$, and then 
$s(F_1 \vee F_2 \vee \cdots \vee F_q) = (2q+1) \cdot s_{\max}(F') $.

To each element $F'$ of $F$ we assign an integer interval $[\ell(F'),r(F')]$ with $s(F') = r(F') - \ell(F')$. We will do this top-down: first we assign an interval to $F$, then we define a subinterval for every term of $F$, etc. 

To $F$, we assign the interval $[n,n+s(F)]$. To elements of a conjunction and disjunction, we assign subintervals of the intervals assigned to the conjunction of disjunction, in such a way that these intervals have the same nesting as the elements in the formula.

Consider an element $F'$ that is the conjunction $F_1 \wedge F_2 \wedge \cdots \wedge F_s$.
Then assign $F_1$ the interval $[\ell(F'),\ell(F')+s(F_1)]$; $F_2$ the interval $[\ell(F')+s(F_1),\ell(F')+s(F_1)+s(F_2)]$,
etc. I.e., $F_i$ is assigned the interval $[\ell(F')+ \sum_{j=1}^{i-1} s(F_j), \ell(F')+ \sum_{j=1}^{i} s(F_j)]$.

Suppose element $F'$ is the disjunction $F_1 \vee F_2 \vee \cdots \vee F_s$.
The construction is similar to that of conjunctions, but now we assign each term the same length interval and keep unused intervals between
the terms. Recall that $s_{\max}(F') = \max_{1\leq i\leq s} s(F_i)$. 
Assign to $F_i$ the interval $[\ell(F') + (2i-1)\cdot s_{\max}(F'), \ell(F')+ 2i\cdot s_{max}(F')]$. 

Note the nesting of intervals, and that we assigned to each element an interval equal to its size.
Also note that we can compute all intervals and sizes in polynomial time.

We now can describe the jobs, precedence constraints and release dates and deadlines.

For each $i$, $1\leq i\leq k$, we start a chain $c_i$. Each of those chains start with a job and then a delay of $n-1$. The first job of the chain is released at time 0.  We will add jobs and specify delays between jobs in the chain such that the total processing time including the delay times is $n+s(F) + 1$. Set the deadline of those chains to $2n +s(F)$, so that the first job can start at times $0,1, \ldots, n-1$.

These chains reflect the variables that are set to true; more precisely, when the first job of one of the chains starts at a time $i$, then this corresponds to setting $x_i$ to true. We call these the {\em true variable chains}. 

To prevent two chains selecting the same variable, we 
add $m-1$ chains, each with $n$ jobs with delay $0$, release date $0$ and deadline $n$. We call those chains \emph{fill chains}. Those chains have to be scheduled from time $0$ until time $n$. This implies that at each time $0, 1, \ldots, n-1$ at most one other job can be schedules, thus at most one true variable chain starts. Hence, the true variable chains select exactly $k$ variable to be set to true. 

We will now extend the true variable chains. Consider the interval $[n,n+s(F)]$ from left to right. 
\begin{itemize}
    \item For each timestep that we encounter that is not part of an interval that corresponds to a literal, we add a delay of $1$ at the end of the chain. 
    \item For each interval $[\ell(F'), r(F')]$ that corresponds to a positive literal $x_i$, we add the following gadget to the chain: $n-1-i$ jobs with delay $0$, then a delay of $1$, then $i$ jobs with delay $0$ and then a delay of $n$.  (See Figure \ref{fig:variable-gadgets}.)
    Notice that no job is scheduled from $\ell(F') +n-1$ until $\ell(F') +n$ if the chain starts at time $i$, and there is a job scheduled at this time otherwise. 
    \item For each element $F'$ of $F$ that is a negative literal $\neg x_i$, we make the following gadget:  a delay of $n-1-i$, a job, then a delay of $i$, and then a delay of $n$.  (See Figure \ref{fig:variable-gadgets}.)
    Notice that a job is scheduled from $\ell(F') +n-1$ until $\ell(F') +n$ if the chain starts at time $i$, and there is no job scheduled at this time otherwise. 
\end{itemize}
Add one job at the end of the chain. 
Notice that the total processing time of those chains is indeed $n + s(F) + 1$. 
\begin{figure}
\includegraphics{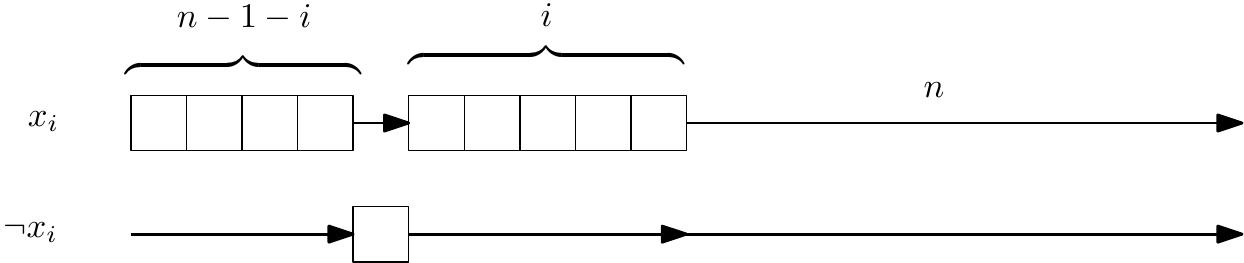}
\caption{The variable gadgets.} \label{fig:variable-gadgets}
\end{figure}

To check whether variables are true, we 
add some chains that consist of a single job. We call those chains \emph{variable check chains}. 
\begin{itemize}
    \item For each element $F'$ of $F$ that consists of a single positive literal (i.e., is of the form $x_i$),
    we make a chain with one job, that is released at time $\ell(F')+n - 1$ and has deadline $\ell(F') + n$. 
    \item For element $F'$ of $F$ that consists of a single negative literal (i.e., is of the form $\neg x_i$), we make a $k$ chains with one job, release date $\ell(F')+n - 1$ and deadline $\ell(F') + n$.
\end{itemize}
The intuition behind this construction is as follows: suppose that we have $k$ machines. For each element $F'$ of $F$ that is of the form $x_i$, there is one job scheduled from $\ell(F')+n - 1$ until $\ell(F') + n$. So for at least one of the true variable chains, we need that no job of this chain to be scheduled from $\ell(F')+n - 1$ until $\ell(F') + n$. This means that one of the true variable chains starts at time $i$. 
For each element $F'$ of $F$ that is of the form $\neg x_i$, there are $k$ job scheduled from $\ell(F')+n - 1$ until $\ell(F') + n$. So for none of the true variable chains we can schedule a job of this chain from $\ell(F')+n - 1$ until $\ell(F') + n$. This means that none of the true variable chains starts at time $i$. 
The other $t'$ machines take care of the disjunctions. 

For each element $F' = F_1 \vee F_2 \vee \cdots \vee F_q$ of $F$ that is a disjunction, we make one chain. This chain has $3\cdot s_{\max}(F')$ jobs with delay $0$. 
The chain will be released at time $\ell(F')$ and has deadline $r(F')$. We call those chains \emph{disjunction chains}. Notice that for every element $F'$ of $F$ that is a literal, there are exactly $t'$ disjunction chains that overlap the interval $[l(F'), r(F')]$, that is, there are exactly $t'$ disjunction chains $C$ with release time at most $l(F')$ and deadline at least $r(F')$.

We now have specified all jobs and the machines they run on. Note that the thickness of this construction is $2k+t'$.

\begin{claim}
If $F$ is satisfiable by setting exactly $k$ variables to true, then the \textsc{Chain Scheduling with Exact Delays} scheduling problem has a solution.
\end{claim}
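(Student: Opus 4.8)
The plan is to turn a satisfying assignment into a schedule in the obvious way and then check that every constraint is met. Suppose $F$ is satisfied by an assignment setting exactly the variables $x_{a_1},\ldots,x_{a_k}$ to true, with $a_1,\ldots,a_k$ distinct elements of $\{0,\ldots,n-1\}$. Alongside it I fix a \emph{witness subtree} of the formula tree of $F$: start at the root (a conjunction), keep every child of each conjunction node, and at each disjunction node keep exactly one satisfied child, so that every leaf of the subtree is a literal made true by the assignment (this is well defined, since every node reached this way is itself satisfied). I then schedule as follows.
\begin{itemize}
\item The $m-1$ fill chains go on $[0,n)$; this is forced, as each has $n$ unit jobs with delay $0$ and a window of length $n$.
\item The true variable chain $c_i$ starts at time $a_i$; the exact delays then fix the rest of $c_i$, and since $c_i$ has total length $n+s(F)+1$ and $a_i\le n-1$, its last job completes by time $2n+s(F)$.
\item Each variable check chain goes at its unique feasible position, i.e.\ its single job at time $\ell(F')+n-1$.
\item For each disjunction $F'=F_1\vee\cdots\vee F_q$, with $F_{i^*}$ the child picked by the witness subtree, the disjunction chain of $F'$ goes on the $3s_{\max}(F')$ consecutive slots centred on the interval of $F_{i^*}$, namely $[\ell(F')+(2i^*-2)s_{\max}(F'),\,\ell(F')+(2i^*+1)s_{\max}(F')]$; since $1\le i^*\le q$ this lies inside $[\ell(F'),r(F')]$.
\end{itemize}
Release dates and deadlines now hold by the interval and length bookkeeping already set up, so it remains to prove the machine bound: at every time $T$, at most $m=k+t'$ jobs run.

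Two facts drive the bound. First, the assigned intervals are laminar and, as noted above, every literal interval lies in exactly $t'$ disjunction windows; hence every time $T$ lies in at most $t'$ disjunction windows, so at most $t'$ disjunction-chain jobs run at $T$. Second, distinct chains $c_i$ start at distinct times. Now case on $T$. If $T\in[0,n)$, the jobs at $T$ are the $m-1$ fill-chain jobs plus at most one true-variable first job, at most $m$ in all. If $T\ge n$ and $T$ is not the check slot $\ell(F')+n-1$ of any literal leaf, then no check-chain job is at $T$, so there are at most $k$ true-variable jobs and at most $t'$ disjunction jobs, again at most $m$.

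The crux is the check slots $T=\ell(F')+n-1$ of literal leaves, where I would show that either a true variable chain or a disjunction chain is necessarily idle at $T$. If $F'=x_a$ with $x_a$ true, then $c_i$ with $a_i=a$ has no job at $T$: its gadget for this leaf is shifted right by $a$ and has a hole exactly at $T$; hence the count is at most $(k-1)+1+t'=m$. If $F'=x_a$ with $x_a$ false, then $F'$ lies outside the witness subtree, and since conjunction nodes in the subtree keep all children, the root-to-$F'$ path must leave the subtree by passing from a disjunction node $D_0$ (still in the subtree) to a child that $D_0$ did not pick; the intervals of distinct children of $D_0$ are separated, so the disjunction chain of $D_0$, placed around the interval of the picked child, lies entirely on one side of the interval of $F'$ and runs no job at $T$; thus at most $t'-1$ disjunction jobs run at $T$, and the count is at most $k+1+(t'-1)=m$. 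Negative-literal leaves $\neg x_a$, which carry $k$ check chains, are symmetric: the gadget of $c_i$ for such a leaf is shifted right by $a_i$ and places its single job at $T$ precisely when $a_i=a$, so if $x_a$ is false no true-variable job is at $T$ and the count is $0+k+t'=m$, while if $x_a$ is true exactly one true-variable job is at $T$, but then $\neg x_a$ lies outside the witness subtree and the same separation argument makes some ancestor's disjunction chain idle at $T$, giving $1+k+(t'-1)=m$. As these cases exhaust all $T$, the schedule is feasible.

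The main obstacle I anticipate is exactly this last case analysis: checking that the right-shift by $a_i$ carries the hole (respectively the job) of each variable gadget into the designated column $\ell(F')+n-1$, and establishing the separation fact that a literal false under the assignment has a disjunction ancestor whose picked child's interval, and hence whose disjunction chain, is disjoint from it. Everything else is routine verification against the intervals and chain lengths fixed above.
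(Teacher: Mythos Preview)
Your proof is correct and follows essentially the same route as the paper: define a witness subtree (the paper calls its nodes ``satisfying''), start the $k$ true-variable chains at the indices of the true variables, place each in-subtree disjunction chain on the block $[\ell(F')+(2i^*-2)s_{\max}(F'),\ell(F')+(2i^*+1)s_{\max}(F'))$ around its chosen child, and then do the same case split on the time step. Your four-way split of the check slot (positive/negative $\times$ true/false) is just a slightly more explicit version of the paper's three cases (satisfying-positive, satisfying-negative, not-satisfying).

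One small omission: your placement rule ``for each disjunction $F'$, with $F_{i^*}$ the child picked by the witness subtree'' is only defined when $F'$ actually lies in the subtree; disjunctions outside the subtree have no picked child. You never say where those chains go. This does not break your machine-count argument, since for such chains you only invoke the blanket bound ``at most $t'$ disjunction windows contain $T$'', which is independent of placement, but the schedule is not fully specified without it. The fix is the same one-liner the paper uses: place each remaining disjunction chain anywhere in its window (e.g.\ at its release time).
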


\begin{proof}
Suppose $F$ is satisfiable by making variables $x_{i_1}, \ldots, x_{i_k}$ true. For each $j$, with
$1\leq j\leq k$, we let one true variable chain start at time $i_j$.

First we introduce a notion \emph{satisfying}, intuitively, this will be the elements that make $F$ true. We define this top-down. First we call $F$ satisfying. For each element $F'$ of $F$:
\begin{itemize}
    \item If $F'$ is satisfying and $F'$ is a conjunction $F' = F_1 \wedge \cdots \wedge F_q$, then all terms $F_i$ are satisfying. 
    \item If $F'$ is satisfying and $F'$ is a disjunction $F' = F_1 \vee \cdots \vee F_q$, then at least one $F_i$ is satisfied, say $F_{j}$. We say that $F_{j}$ is satisfying, and the other term $F_{i}$ with $i\neq j$ are not satisfying. 
    \item If $F'$ is not satisfying, all its terms are not satisfying. 
\end{itemize}

For each element $F'$ of $F$ that is a disjunction $F' = F_1 \vee \cdots \vee F_q$, consider the disjunction chain $C$ associated with this element. If $F'$ is not satisfying, then start this chain $C$ arbitrarily, say at its release time. If $F'$ is satisfying, let $F_j$ be its term that is satisfying. 
Now, start this chain $C$ at time $\ell(F')+ (2j - 2) s_{\max}(F')$, where $s_{\max}(F')$ is again the maximum interval size of the terms $F_i$.

Now we will verify that is a feasible schedule, that is, that we never use more than $m$ machines. 

First consider a time step $\alpha$ from $i$ to $i+1$, with $0\leq i\leq n-1$. At this time step there are $m-1$ jobs of fill chains scheduled. Since the variables $x_{i_1}, \ldots, x_{i_k}$ are different, the true variable chains start at different times. Hence, there is at most one job of a true variable chain scheduled at $\alpha$. Thus, there are at most $m$ jobs scheduled at time $\alpha$. 

Consider a time step $\alpha$ from $i$ to $i+1$ with $i \geq n$, such that $i \neq \ell(F')+n - 1$ for all elements $F'$ of $F$ that are literals. Notice that at those times no jobs of fill chains and variable check chains are scheduled. There are at most $k$ jobs of true variable chains scheduled at time step $\alpha$, since there are only $k$ true variable chains. And there are at most $t'$ jobs of disjunction chains scheduled, since there are $t'$ levels of disjunction. It follows that there are at most $k + t'= m$ jobs scheduled at time step $\alpha$.  

Consider a time step $\alpha$ from $i$ to $i+1$ with $i \geq n$, such that $i = \ell(F')+n - 1$ for an element $F'$ of $F$ that is a literal. We distinguish three cases. 

Suppose that $F'$ is satisfying, and $F'$ is a positive literal $F' = x_j$. Then we know that there are $t'$ machines used for the disjunction chains. Besides, there is one machine used for the variable check chains. Since we know that $F'$ is satisfying, $x_j$ is true. Thus one of the true variable chains starts at time $j$. So this chain has no job scheduled from $i$ until $i+1$. So at most $k-1$ true variable chains have a job scheduled from $i$ until $i+1$. In total there are at most $t' + 1 + k-1 = m$ machines used. 

Suppose that $F'$ is satisfying, and $F'$ is a negative literal, say $F'= \neg x_j$. Then we know that there are $t'$ machines used for the disjunction chains. There are $k$ machines used for the variable check chains. Since we know that $F'$ is satisfying, $x_j$ is false. Thus none of the true variable chains starts at time $j$. Hence, no true variable chain has a job scheduled from $i$ until $i+1$. In total there are at most $m$ machines used from $i$ until $i+1$. 

Suppose that $F'$ is not satisfying. Let $F''$ be the last satisfying element on the path from $F$ to $F'$. The disjunction chain that corresponds to $F''$ has no job that is scheduled form $i$ until $i+1$. Hence, there are at most $t'- 1$ machines used for the disjunction chains from time $i$ until $i+1$. 
It follows that there are $k+1$ machines for the true variable chains and the variable check chains. Since all true variable chains start at different times, those chains together have $k+1$ jobs that can be scheduled from $i$ until $i+1$. We conclude that we used at most $m$ machines from time $i$ until $i+1$. 
\end{proof}

\begin{claim}
Suppose the \textsc{Chain Scheduling with Exact Delays} scheduling problem has a solution, then $F$ can be satisfied by setting exactly $k$ variables to true.
\end{claim}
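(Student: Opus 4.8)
The plan is to invert the construction: from a feasible schedule I will read off a variable assignment and argue that it satisfies $F$ using exactly $k$ true variables.

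\emph{Step 1: extracting the assignment.} During the time steps $0,1,\ldots,n-1$ the $m-1$ fill chains occupy $m-1$ of the $m$ machines, so at most one further job runs at each of these steps. Every true variable chain is released at time $0$ and, because of its leading delay of $n-1$ and its deadline, must place its first job in one of the slots $0,\ldots,n-1$. Hence the $k$ true variable chains start at $k$ \emph{distinct} times $i_1<\cdots<i_k$ in $\{0,\ldots,n-1\}$. I set $x_{i_1},\ldots,x_{i_k}$ true and all remaining variables false; this is an assignment with exactly $k$ true variables.

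\emph{Step 2: reading off a choice for each disjunction.} Fix a disjunction $F'=F_1\vee\cdots\vee F_q$ with its disjunction chain $C$: a block of $3s_{\max}(F')$ consecutive jobs with delay $0$ that must be scheduled inside the interval $[\ell(F'),r(F')]$ of length $(2q+1)s_{\max}(F')$. The $q$ term windows have length $s_{\max}(F')$ and occupy the ``odd'' blocks of this interval, so a short interval-arithmetic argument shows that wherever the length-$3s_{\max}(F')$ block is placed, it fully contains the window $[\ell(F_j),r(F_j)]$ of at least one term $F_j$; I pick one such $F_j$ and call it the \emph{chosen term} of $F'$. I then mark elements \emph{selected} top-down exactly as ``satisfying'' was defined in the previous claim, but using chosen terms: $F$ is selected; a selected conjunction makes all its terms selected; a selected disjunction makes only its chosen term selected. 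The selected elements form a subtree of the parse tree of $F$ that takes all children at conjunctions and one child at each disjunction, so a trivial bottom-up induction from the leaves shows it suffices to prove that every selected literal evaluates to true under the assignment: then every selected element, in particular $F$, is true.

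\emph{Step 3: selected literals are true.} Let $F'$ be a selected literal and let $\alpha$ be the time step from $\ell(F')+n-1$ to $\ell(F')+n$; note $\alpha\subseteq[\ell(F'),r(F'))$ since $s(F')=2n$. Because $F'$ is selected, at every one of its $t'$ ancestor disjunctions the chosen term is the child that is an ancestor of $F'$, so the block of $3s_{\max}$ jobs of each of the $t'$ disjunction chains overlapping $[\ell(F'),r(F')]$ fully contains $[\ell(F'),r(F')]$ and hence has a job at $\alpha$. Moreover, at $\alpha$ there is no job of a fill chain (those run only in $[0,n)$), no job of any disjunction chain other than these $t'$ (the intervals are nested), and no job of the variable check chain or gadget of any literal other than $F'$ (distinct literals get disjoint intervals), so the only remaining jobs at $\alpha$ come from the variable check chain(s) of $F'$ and from the $k$ true variable chains. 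If $F'=x_j$, then exactly one variable check job is forced at $\alpha$, leaving capacity $m-t'-1=k-1$, and since a true variable chain has a job at $\alpha$ unless it started at time $j$, feasibility forces some true variable chain to start at time $j$, i.e.\ $x_j$ is true. If $F'=\neg x_j$, then $k$ variable check jobs are forced at $\alpha$, leaving capacity $m-t'-k=0$, and since a true variable chain has a job at $\alpha$ exactly when it started at time $j$, feasibility forces that no true variable chain starts at time $j$, i.e.\ $x_j$ is false and $\neg x_j$ is true. Hence every selected literal is true, so $F$ is satisfied.

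The step I expect to be the main obstacle is Step 3, specifically pinning down that the set of jobs running at the critical step $\alpha$ is exactly ``the $t'$ ancestor disjunction chains, the variable check chain(s) of $F'$, and the $k$ true variable chains''. This needs careful case checking to rule out stray jobs from other gadgets and from non-ancestor disjunction chains, and it leans on the interval-arithmetic fact of Step 2 together with the nesting of the interval assignment to guarantee that every ancestor disjunction chain really does cover $\alpha$.
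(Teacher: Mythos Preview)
Your proof is correct and follows essentially the same approach as the paper: you extract the assignment from the start times of the true variable chains (using the fill chains to force distinctness), you define a top-down notion of ``selected'' elements (the paper calls these ``demonstrated''), and you verify bottom-up that selected literals are true by a machine-count at the critical time step $\ell(F')+n-1$. The only cosmetic difference is that you pick a single chosen term at each disjunction, whereas the paper allows all fully covered terms to be demonstrated; this changes nothing in the argument.
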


\begin{proof}
For each true variable chain, if it starts at time $i$, then set $x_i$ to true. All other variables are set
to false, i.e, $x_i$ is true, if and only if there is a true variable chain that starts at time $i$.

Note that all variable chains must start at different times, and they must start at times $0, 1, \ldots n-1$.
If two of these start at the same time $i$, then both have a job scheduled from $i$ until $i+1$, but there are $m-1$ fill chains that have a job scheduled from $i$ until $i+1$ as well, which is a contradiction with the total number of machines. Thus, we have set exactly $k$ variables to true.

We claim that this setting makes $F$ true. We define elements of $F$ to be \emph{demonstrated} in the following way, recursively. We say that $F$ is demonstrated.
\begin{itemize}
    \item If a conjunction is demonstrated, then all its terms are demonstrated. 
    \item If a disjunction $F' = F_1 \vee \cdots \vee F_q$ is demonstrated, then we consider the corresponding disjunction chain $C$. If for every time step in $[l(F_j), r(F_j)]$ a job of $C$ is scheduled, we say that $F_j$ is demonstrated.
\end{itemize}
After this top-down definition of demonstrated elements, we will now inductively show bottom-up that
demonstrated elements are satisfied by the setting of variables described above.

Consider a demonstrated element $F'$ that is a literal. By the definition of demonstrated, $t'$ of the disjunction chains have a job scheduled at each time step of $[l(F'), r(F')]$. Consider the time step from $l(F') + n -1$ until $l(F') + n$. We know that $t'$ machines process a job of a disjunction chain at this time step, so at most $k$ machines process jobs of true variable chains and variable check chains.  

Suppose that $F'$ is a positive literal, say $x_i$. Notice that one machine processes a check variable chain from $l(F') + n -1$ until $l(F') + n$. So at most $k-1$ machines process a job of a true variable chain. Hence there must be at least one true variable chain that does not have a job scheduled at this time. By construction of the true variable chains, such a chain must start at time $i$; and thus we set $x_i$ to true, i.e., our setting satisfies
the formula consisting of the single positive literal $x_i$.

If $F'$ is a negative literal $\neg x_i$, $k$ machines are processing check variable chains from $l(F') + n -1$ until $l(F') + n$; thus no true variable chain can have a job scheduled at this step. By construction
of the true variable chains, this implies that no true variable chain starts at time $i$, and thus $x_i$ is set to false. Hence, $F'$ is satisfied. 
 
Now consider a demonstrated element $F'$ and assume, by induction, that for all its terms $F_i$ holds: if $F_i$ is demonstrated, then $F_i$ is true. 

Suppose that $F'$ that is a conjunction. All its terms are demonstrated, and by
the induction hypothesis, all its terms are true. Thus the conjunction is also true.

Suppose that $F'$ is a disjunction. Since the corresponding disjunction chain has length $3s_{\max}(F')$, there is at least one $F_j$ such that for every time step in $[l(F_j), r(F_j)]$ a job of $C$ is scheduled. By definition, this $F_j$ is demonstrated. By induction, $F_j$ is true, and thus $F'$ is true.

We conclude that $F$ is satisfied. 
\end{proof}

We now have shown:

\begin{theorem}
The \textsc{Chain Scheduling with Exact Delays} problem, parameterized by the thickness $\tau$, is $W[t]$-complete  for all $t\in \N$. 
\label{thm:thickness-m-machines-Wt}
\end{theorem}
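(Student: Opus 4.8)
The plan is to observe that the construction given above, together with the two claims just established, \emph{is} an fpt-reduction from \textsc{Weighted $t$-Normalized Satisfiability} to \textsc{Chain Scheduling with Exact Delays} parameterized by the thickness, and to verify the three conditions that this requires. Since \textsc{Weighted $t$-Normalized Satisfiability} is $W[t]$-complete for every $t \geq 2$ by the theorem of Downey and Fellows quoted above, exhibiting such an fpt-reduction yields $W[t]$-hardness of our problem for every $t \geq 2$; and $W[1]$-hardness then follows formally from $W[1] \subseteq W[2]$ (every problem in $W[1]$ lies in $W[2]$ and hence fpt-reduces to any $W[2]$-hard problem). So the statement holds for all $t\in\N$.

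For the three conditions: first, polynomial-time computability is immediate, since the sizes $s(\cdot)$ and intervals $[\ell(\cdot),r(\cdot)]$ are given by explicit recurrences over the parse tree of $F$ (and, as already noted, are computable in polynomial time), while the numbers of true variable chains, fill chains, variable check chains and disjunction chains, and the number of jobs in each, are polynomially bounded in $|F|$. Second, the output parameter must be bounded by a function of the input parameter $k$ alone; here the construction yields thickness $\tau = 2k + t'$, where $t'$ is the number of levels of disjunction of the $t$-normalized formula $F$, which is bounded by a function of $t$ alone (indeed $t' \leq t$), so for each fixed $t$ we get $\tau = 2k + O(1)$, a function of $k$ only. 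Third, correctness is exactly the content of the two claims: $F$ can be satisfied by setting exactly $k$ variables to true if and only if the constructed \textsc{Chain Scheduling with Exact Delays} instance admits a feasible schedule.

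I do not expect a genuine obstacle in assembling the theorem from these ingredients — the intricate work is entirely inside the gadget construction and the two claims, which are already done. The one point I would double-check is the counting behind the thickness bound: that at each ``literal-check'' time step $\ell(F')+n-1$ exactly $t'$ disjunction chains are active (every literal interval is overlapped by precisely $t'$ disjunction chains, as the construction asserts), and that together with the $k$ true variable chains and their at most $k$ variable check chains this gives thickness exactly $2k+t'$ and nothing larger (the fill chains are active only on $[0,n)$ and so do not coincide with any literal-check step, and no other time step has more active chains). Once that bookkeeping is confirmed, the theorem is immediate.
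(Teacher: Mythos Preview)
Your proposal is correct and takes essentially the same approach as the paper: the paper's proof of this theorem is nothing more than the construction together with the two claims, read as an fpt-reduction from \textsc{Weighted $t$-Normalized Satisfiability}, and you have simply made the verification of the fpt-reduction conditions (polynomial time, parameter bound $\tau = 2k + t'$, equivalence) explicit. Note that, like the paper itself, you establish only $W[t]$-hardness; the word ``complete'' in the theorem statement appears to be a slip (the abstract and Table~1 both state ``$W[t]$-hard for all $t$'', and no membership argument is given anywhere for this variant).
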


\subsection{Single machine}
Again, assume the delays are exact. 
We now show that the problem stays hard when there is only one machine.  

\begin{theorem}
The \textsc{Chain Scheduling with Exact Delays} problem, parameterized by the thickness $\tau$, is $W[t]$-complete  for all $t\in {\N}$, when only $1$ machine is available.
\label{theorem:thickness-1-machine-Wt}
\end{theorem}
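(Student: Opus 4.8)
The plan is to reduce from the multi-machine construction of Theorem~\ref{thm:thickness-m-machines-Wt} by \emph{simulating} $m=k+t'$ parallel machines on a single machine, using the standard trick of slowing time down by a factor of $m$ and giving each of the $m$ ``virtual machines'' its own residue class modulo $m$. First I would take the instance produced in the previous subsection and multiply every delay, every release date and every deadline by $m$; a job that used to occupy the unit time slot $[a,a+1)$ on one of the $m$ machines now must occupy one specific slot inside the block $[ma, ma+m)$. The issue is that after scaling, a chain with a delay $l$ becomes a chain with delay $ml$, which still leaves the successor job free to land anywhere in the target block rather than in the ``correct'' residue class, so the simulation is not yet faithful.

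To pin each job down to a residue class I would modify the gadgets so that scaling is combined with an explicit offset: when a chain is meant to run on virtual machine number $p\in\{0,\dots,m-1\}$ at (old) time $a$, its corresponding job is forced into slot $ma+p$. For the true variable chains and the fill chains this is easy because their starting times are already what encodes the solution; here I would instead make the construction symmetric by having each of the $k$ true-variable chains and each fill chain carry a fixed additive shift, realised by prepending a short fixed delay, and interleave them so that the $m$ jobs that the old solution scheduled ``in parallel'' at time $a$ now occupy the $m$ consecutive slots $ma,\dots,ma+m-1$ in a prescribed order. The variable-check chains and disjunction chains, whose jobs in the old instance sat on specific machines at specific times, get their release date/deadline windows scaled and then narrowed to the single correct slot $ma+p$ by setting $d_C=r_C+1$ after the shift, exactly as the single-job check chains already do in the multi-machine proof. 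The key structural fact that makes this work is that in the multi-machine instance, at \emph{every} time step exactly $m$ jobs are scheduled in any feasible solution (the fill chains force $m-1$ jobs at times $<n$, and the counting arguments in the two Claims show the disjunction/check machinery is tight); so a single machine running at speed $1/m$ has exactly enough room, with no slack, to host the scaled schedule, and conversely any single-machine schedule, read block by block, induces an assignment of the $m$ jobs per block to the $m$ virtual machines.

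Concretely the correctness proof splits, as before, into two claims. For the forward direction I would take a satisfying assignment, invoke the forward Claim of Theorem~\ref{thm:thickness-m-machines-Wt} to get an $m$-machine schedule, and then define the single-machine schedule block-wise: whichever job the $m$-machine schedule ran on machine $p$ during $[a,a+1)$ is run during $[ma+p,ma+p+1)$; one checks the scaled delays and the (scaled, narrowed) release/deadline windows are all met, and that at most one job runs at a time because within a block the $p$'s are distinct and across blocks the times are disjoint. For the reverse direction, given a single-machine schedule, I would partition the timeline into length-$m$ blocks, observe that each job lands in a unique block and slot, and read off virtual machine $p$ from the slot index; the delay constraints, being multiples of $m$, translate back to the original delay constraints, and the windows translate back too, so we recover a feasible $m$-machine schedule and then apply the reverse Claim. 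Membership in $W[t]$ is immediate since the construction is a polynomial-time, parameter-preserving reduction and the problem is in $W[t]$ by the general XP/$W$-hierarchy containment already used for the $m$-machine case (the thickness only grows by a constant factor, $\tau' = m\tau$ in the crude bound, or one can be more careful and keep it $O(k+t')$).

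The main obstacle I anticipate is making the residue-class assignment \emph{rigid} rather than merely possible: the scaling-by-$m$ idea gives room for $m$ jobs per block, but nothing yet forbids, say, two of those jobs from sliding into the same slot if some other slot in the block happens to be empty in a given feasible solution. This is exactly why the tightness of the original construction matters --- every block must be completely full --- and why I would want to double-check the counting arguments in the two Claims guarantee ``exactly $m$'', not just ``at most $m$,'' at every relevant time step, possibly adding a few extra unit-job fill chains with one-slot windows to force fullness in the blocks corresponding to times $\ge n$ where the original argument only asserted an upper bound. Once fullness is guaranteed, the block-to-virtual-machine decoding is forced, and the equivalence goes through exactly as in the parallel-machine proof.
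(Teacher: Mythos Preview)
Your high-level strategy---dilate time and put each ``virtual machine'' in its own residue class---is exactly the right instinct, and it is what the paper does. But two concrete choices in your plan break the argument.

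First, you scale by $m$. In the multi-machine instance the thickness is $\tau = 2k+t'$ while $m = k+t'$, so strictly more than $m$ chains overlap at some time (for instance, at times in $[0,n)$ we have $m-1$ fill chains together with $k$ true-variable chains, giving $2k+t'-1>m$ overlapping chains whenever $k\ge 2$). With only $m$ residue classes you cannot assign one class per chain, and you cannot assign classes per job either, because with exact delays the class of every job in a chain is determined by the class of its first job. So the ``fixed additive shift per chain'' idea is infeasible with scaling factor $m$.

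Second, your tightness claim---that every time step carries exactly $m$ jobs in every feasible solution---is false. Already in $[0,n)$: the $m-1$ fill chains are always present, but only $k$ of the $n$ slots receive the first job of a true-variable chain, and \emph{which} $k$ depends on the solution. At the remaining $n-k$ slots only $m-1$ jobs run. The same phenomenon recurs later (e.g.\ disjunction chains cover only part of their window), so you cannot add filler unit chains in advance without knowing the solution; and if you add enough fillers to cover every possible shortfall you over-constrain the yes-instances.

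The paper fixes both issues in one stroke: it scales by $\tau$, not $m$. The delay $d$ becomes $\tau d + \tau - 1$, so consecutive jobs in a chain are $\tau(d+1)$ apart and stay in the same residue class mod $\tau$; release dates and deadlines are multiplied by $\tau$. Now each block $[i\tau,(i+1)\tau)$ has $\tau$ slots, and one adds $\tau - m$ single-job ``extra'' chains with window exactly that block. These extras absorb the slack: a block then holds $\tau-m$ extra jobs plus at most $m$ regular jobs, so $\le \tau$ jobs in $\tau$ slots, with no need for tightness of the original instance. For the forward direction one colours the chains with $\tau$ colours by greedy interval colouring (possible precisely because thickness is $\tau$) and uses the colour as the residue class; for the reverse direction one simply floors by $\tau$. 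The new thickness is at most $\tau + (\tau - m)$, which is still $O(k+t')$, so the reduction is parameter-preserving.
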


Let $\tau$ be the thickness of the original instance. 
The main idea of the transformation is to replace each time step on $m$ machines by $\tau$ time steps on a single machine. Every chain will be assigned a number $i \in \{0, 1, \ldots, \tau - 1\}$ and its jobs will be scheduled at times $i \pmod \tau$, this will make sure that at every timestep only one job is scheduled. For every interval $[i\tau, (i+1)\tau]$, we will have $\tau-m$ chains that have one job; this ensures that at at most $m$ time steps of the interval a job of a regular chain is scheduled. We now proceed with the formal description. 

We transform from the case with $m$ machines (Theorem~\ref{thm:thickness-m-machines-Wt}). 
Suppose we have an instance with $m$ machines. For every interval $[i\tau, (i+1)\tau]$, we add $\tau-m$ additional chains, with a single job, release date $i\tau$ and deadline $(i+1)\tau$. We call those chains {\em extra}.

We copy the chains from the given instance, except that: 
\begin{itemize}
    \item If a chain has release date $\alpha$, then it now has release date $\alpha\cdot \tau$.
    \item If a chain has deadline $\beta$, then it now has deadline $\beta\cdot \tau$. 
    \item Every delay $d$ is replaced by a delay $\tau d + \tau - 1$. 
\end{itemize}
We call these chains {\em regular}.

\begin{claim}
Suppose we have a solution for the transformed instance with one machine. Then we have a solution for the original instance with $m$ machines.
\end{claim}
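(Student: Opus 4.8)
The plan is to turn any feasible one-machine schedule $S'$ of the transformed instance into a feasible $m$-machine schedule $S$ of the original instance by ``contracting'' every block of $\tau$ consecutive time steps of $S'$ into a single time step of $S$. Formally, for each job $i$ of a regular chain, if $t'_i$ is the start time of $i$ in $S'$, then in $S$ we start $i$ at time $\sigma(i) := \lfloor t'_i/\tau\rfloor$; the jobs of the extra chains are simply discarded, since they have no counterpart in the original instance.

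First I would verify the precedence and delay constraints. For an original precedence constraint $(i,i')$ with delay $d = l_{i,i'}$, the transformed instance uses delay $\tau d + \tau - 1$, so $t'_{i'} = t'_i + (\tau d + \tau - 1) + 1 = t'_i + \tau(d+1)$, whence $\sigma(i') = \lfloor t'_i/\tau\rfloor + (d+1) = \sigma(i) + d + 1$, exactly the delay demanded in the original instance. (In particular the start times of all jobs in a fixed regular chain are congruent modulo $\tau$.) For the boundary constraints: a regular chain $C$ received transformed release date $r_C\tau$, so its first job has $t' \ge r_C\tau$ and hence $\sigma \ge r_C$; it received transformed deadline $d_C\tau$, so its last job finishes in $S'$ at time at most $d_C\tau$, giving $t' \le d_C\tau - 1$ and therefore completion time $\sigma + 1 = \lfloor t'/\tau\rfloor + 1 \le d_C$ in $S$. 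So $S$ obeys all release dates and deadlines.

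The heart of the claim is the bound on the number of simultaneously running jobs. Fix a time step $[j,j+1)$ of $S$. A regular job $i$ runs during $[j,j+1)$ in $S$ precisely when $\sigma(i) = j$, i.e.\ when its $S'$-start $t'_i$ lies in the window $W_j := \{j\tau,\dots,(j+1)\tau-1\}$ of $\tau$ consecutive time steps. Moreover, each of the $\tau-m$ extra chains associated with the interval $[j\tau,(j+1)\tau]$ has its single job released at $j\tau$ and due by $(j+1)\tau$, hence that job starts somewhere in $W_j$. Since $S'$ uses a single machine, all these jobs — the regular ones with $\sigma(i)=j$ together with the $\tau-m$ extra ones — start at pairwise distinct time steps of $W_j$. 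As $|W_j| = \tau$, it follows that at most $\tau - (\tau-m) = m$ regular jobs run during $[j,j+1)$ in $S$, so $S$ is feasible for the original $m$-machine instance.

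I do not expect a serious obstacle here: once one notices that the rescaling $d \mapsto \tau d + \tau - 1$ of the delays makes one original time unit correspond to a clean block of $\tau$ transformed time steps, of which the extra chains consume exactly $\tau-m$, the proof is bookkeeping. The only points needing a little care are the off-by-one in the deadline computation (a unit job started at $t'$ finishes at $t'+1$, so $t' \le d_C\tau - 1$) and the implicit requirement that $\tau \ge m$ so that the extra chains exist — which holds for the instances produced by Theorem~\ref{thm:thickness-m-machines-Wt}, where $\tau = 2k + t' > k + t' = m$.
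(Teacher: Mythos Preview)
Your proof is correct and follows the same approach as the paper: contract each block of $\tau$ transformed time steps to one original time step via $\lfloor t'/\tau\rfloor$, then use the $\tau-m$ extra jobs per block to bound the number of regular jobs per block by $m$. Your version is in fact more thorough than the paper's, which does not explicitly verify the delay, release-date, and deadline constraints or note the need for $\tau\ge m$; these checks are exactly the ones a careful reader would want to see.
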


\begin{proof}
For each regular chain, let it start at time $\lfloor t/\tau \rfloor$, when its transformed instance starts at time $t$. This implies that for every job in the chain it will start at time $\lfloor t/\tau \rfloor$, when its transformed instance starts at time $t$. We know that for each time interval $[t\tau, (t+1)\tau]$, $\tau-m$ steps are used for extra jobs, so $m$ time steps are available for jobs of regular chains. 
Thus, at every time step $t$ in the original instance, at most $m$ jobs are scheduled.
\end{proof}

\begin{claim}
Suppose we have a solution for the original instance with $m$ machines. Then we have
a solution for the transformed instance with $1$ machine.
\end{claim}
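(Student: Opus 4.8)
The plan is to turn a feasible $m$-machine schedule for the original instance into a single-machine schedule for the transformed instance by ``spreading'' each original time step $t$ over the block of $\tau$ consecutive slots $[t\tau,(t+1)\tau)$, assigning to every regular chain a fixed offset within each block.

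First I would fix the offsets. Consider the graph on the original chains in which two chains are adjacent precisely when they overlap, i.e.\ their intervals $[r_C,d_C)$ intersect; this is an interval graph, and by the definition of thickness its largest clique has size $\tau$ (every point $t$ induces the clique of chains active at $t$, and conversely). Hence the natural greedy colouring, processing the chains in order of their release dates and giving each chain the smallest colour not used by an already-processed overlapping chain, uses at most $\tau$ colours; fix such a proper colouring with colours $0,1,\dots,\tau-1$ and write $\mathrm{off}(C)$ for the colour of chain $C$. Thus overlapping chains get distinct offsets.

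Next I would build the schedule. If in the original solution the $j$-th job of a regular chain $C$ is scheduled at time $t$, schedule that job in the transformed instance at time $t\tau+\mathrm{off}(C)$; the $\tau-m$ extra chains of a block will be placed afterwards. I then check the constraints. For a precedence constraint with delay $d$ inside a chain of offset $i$: if the predecessor is at original time $t$, hence transformed time $t\tau+i$, then the successor is at original time $t+d+1$, and $t\tau+i+(\tau d+\tau-1)+1=(t+d+1)\tau+i$, which is exactly $(\text{successor's original time})\cdot\tau+i$; so the offset propagates along the whole chain and the scaled exact delays are met. Release dates and deadlines scale as intended: the first job of $C$ is at original time $\ge r_C$, hence at transformed time $\ge r_C\tau$, and the last job is at original time $\le d_C-1$, hence at transformed time $\le (d_C-1)\tau+(\tau-1)=d_C\tau-1$, finishing by $d_C\tau$. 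Finally, two regular jobs can occupy the same transformed slot $t\tau+i$ only if they belong to chains that both have a job at original time $t$ and share the offset $i$; but such chains overlap at $t$ and therefore have different colours, a contradiction, so at most one regular job sits in each slot. Within a single chain the (strictly increasing) original times give distinct transformed times as well.

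It remains to place the extra chains. In the block $[t\tau,(t+1)\tau)$, the original solution schedules at most $m$ jobs at time $t$, so at most $m$ of the $\tau$ slots $t\tau,\dots,t\tau+\tau-1$ carry a regular job; put the $\tau-m$ extra chains of this block on $\tau-m$ of the remaining free slots, one per chain. Their release date $t\tau$ and deadline $(t+1)\tau$ are clearly respected, and no slot receives two jobs. Hence the transformed instance admits a valid single-machine schedule.

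The one step that genuinely uses the hypothesis is the offset assignment: we must colour the overlap (interval) graph of the chains with only $\tau$ colours, which is exactly where the thickness bound and the perfectness of interval graphs (equivalently, the correctness of the left-to-right greedy colouring) come in. Everything else is bookkeeping: verifying that delays, release dates and deadlines scale by the factor $\tau$, and the counting argument that every block leaves enough free slots for its extra chains.
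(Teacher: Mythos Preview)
Your proof is correct and follows essentially the same approach as the paper: both assign each chain an offset in $\{0,\dots,\tau-1\}$ via a greedy left-to-right colouring of the interval/overlap graph, start chain $C$ at $t\tau+\mathrm{off}(C)$, and fill the remaining slots of each block with the extra chains. Your write-up is in fact more careful than the paper's in explicitly verifying that the scaled delays, release dates, and deadlines are respected.
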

\begin{proof}
We start with assigning a number $0, 1, \ldots, \tau-1$ to every chain such that for every time step all the chains that overlap this time step have different number. We denote this assignment by $c$. We can do this as follows: go through time $0, 1, 2, \ldots$, and  every time a chain is released, assign a number that is currently unused, if a deadline passes, the number of the corresponding chain becomes available again. We can do this with $\tau$ numbers, since by definition for every timestep there are at most $\tau$ chains that overlap this timestep.  

For every regular chain $C$, let $C$ start at time $t\tau + c(C)$, where $t$ is the starting time of the corresponding original chain. Then chains of thickness number $i$ only have jobs starting at times $t$ with $t \equiv i \pmod \tau$.
For every time $t$, all jobs that are scheduled to start at time $t$ in the original instance, will now be scheduled in the interval $[t\tau, (t+1)\tau]$ in the transformed instance. Those jobs are in different chains and those chains are assigned different numbers. Thus those jobs are scheduled at different times in the tranformed instance.  

In the original instance, at each time $t$ at most $m$ chains have a job scheduled to start at $t$, so, in the transformed instance, for each interval $[t\tau, (t+1)\tau]$, at most $m$ regular chains have a job scheduled in this interval. Thus, we can schedule the
$\tau-m$ extra chains in this time interval at the time steps where no job of a regular chain is scheduled.
\end{proof}

As the transformation can be carried out in polynomial time, Theorem~\ref{theorem:thickness-1-machine-Wt} follows
from the transformation and Theorem~\ref{thm:thickness-m-machines-Wt}.

\subsection{Constant number of parallel machines}

We can easily transform the single machine instance to an instance with a constant number $m$ of parallel machines. Let $T$ be the maximum deadline of all chains. Introduce $m-1$ new chains with $T$ jobs each and $0$ delays. The $m-1$ new machines will be processing those $m-1$ new chains, while the original machine processes the original chains. We conclude the following result. 
\begin{theorem}
The \textsc{Chain Scheduling with Exact Delays} problem with a fixed number of machines, parameterized by the thickness $\tau$, is $W[t]$-complete  for all $t\in {\N}$.
\end{theorem}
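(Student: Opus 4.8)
The plan is to reduce from the single-machine case established in Theorem~\ref{theorem:thickness-1-machine-Wt}, padding the instance so that $m-1$ of the machines are permanently occupied and effectively only one machine is free for the original jobs. Concretely, given a single-machine instance with maximum deadline $T$, I would keep all of its chains and release dates, deadlines, and delays unchanged, set the number of machines to the fixed constant $m$, and add $m-1$ new \emph{padding} chains, each consisting of $T$ jobs with all delays equal to $0$, release date $0$ and deadline $T$. Since such a padding chain has total processing time exactly $T$ and must fit inside $[0,T]$, each of these $m-1$ chains is forced to occupy every time step $0,1,\ldots,T-1$ on some machine. Note also that each padding chain overlaps every other chain, so the thickness increases by exactly $m-1$, i.e., from $\tau$ to $\tau + m - 1$, which is still bounded by a function of the parameter (indeed $m$ is a constant). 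The transformation is clearly computable in polynomial time.

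The correctness argument splits into the two usual directions. For the forward direction, from a feasible single-machine schedule one obtains a feasible $m$-machine schedule by running the original jobs on one machine exactly as before and scheduling the $j$-th padding chain ($1 \le j \le m-1$) on machine $j+1$, placing its $i$-th job at time $i-1$; at every time step the original machine carries at most one job and each of the other $m-1$ machines carries exactly one padding job, so at most $m$ jobs run simultaneously, and all delay, release-date, and deadline constraints are met since they were met before and the padding chains trivially satisfy theirs. For the reverse direction, suppose we have a feasible $m$-machine schedule of the transformed instance. Each of the $m-1$ padding chains has total length $T$ and an interval $[0,T]$ of length $T$, so it must occupy time step $t$ for every $t \in \{0,\ldots,T-1\}$; hence at each such time step at least $m-1$ machines are busy with padding jobs, leaving at most one machine for the remaining (original) jobs. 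Since all original jobs have release dates $\ge 0$ and deadlines $\le T$, they all run within $[0,T)$, and therefore at every time step at most one original job is scheduled. Restricting the schedule to the original jobs thus yields a feasible single-machine schedule of the original instance, with the same start times and hence the same (exact) delays.

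Combining the two directions, the transformed instance is a yes-instance if and only if the original single-machine instance is, and since the new thickness is $\tau + m - 1$ and $m$ is a fixed constant, this is a parameterized reduction from \textsc{Chain Scheduling with Exact Delays} on one machine parameterized by thickness to the same problem on $m$ machines parameterized by thickness. Membership of the latter in $W[t]$ follows because it is a restriction of the general variable-machine case; alternatively, one can argue directly that the XP/$W$-hierarchy membership transfers, but the hardness is what requires work here. Appealing to Theorem~\ref{theorem:thickness-1-machine-Wt}, which gives $W[t]$-hardness (and completeness) of the single-machine problem for every $t$, the desired $W[t]$-completeness for a fixed number of machines follows for all $t \in \N$. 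I expect the only subtlety — and it is a minor one — to be making explicit that the padding chains are forced to fill $[0,T)$ entirely and that nothing in the original instance lies outside $[0,T)$; everything else is bookkeeping.
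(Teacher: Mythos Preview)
Your proposal is correct and follows essentially the same approach as the paper: the paper also reduces from the single-machine case by letting $T$ be the maximum deadline and adding $m-1$ new chains of $T$ jobs with $0$ delays, so that the extra $m-1$ machines are forced to process these padding chains while one machine handles the original instance. Your write-up is more detailed (you make the release date, deadline, thickness increase, and both correctness directions explicit), but the construction and the argument are the same.
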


\subsection{Minimum delays}
The proof above seems not to modifiable to the \textsc{Chain Scheduling with Minimum Delays} problem.
In Section~\ref{subsection:minimumtimeschains}, we show that \textsc{Chain Scheduling with Minimum Delays}, parameterized
by the number of chains is $W[1]$-hard when $m=1$, and $W[2]$-hard when the number of
machines $m$ is variable. As the thickness is at most the number of chains, it follows that
\textsc{Chain Scheduling with Minimum Delays}, parameterized by the thickness is $W[1]$-hard for one machine,
and $W[2]$-hard for a variable number of machines. Membership in $W[1]$ or $W[2]$ is open, however.

\section{Parameterization by the number of chains}
\label{section:chains}
We now give the complexity results when we use the number of chains as parameter. 
In Sections~\ref{subsection:singlemachine}, \ref{subsection:constantmachines}, and 
\ref{subsection:variablemachines}, we consider \textsc{Chain Scheduling with Exact Delays},
with the number of machines respectively 1, a constant, or variable. In Section~\ref{subsection:minimumtimeschains}, we consider \textsc{Chain Scheduling with Exact Delays}.

\subsection{Single machine}
\label{subsection:singlemachine}
In this section, we consider the variant where the number of chains $c$ is a parameter, and at each step in time, there is one machine available. 
We assume that the delays are exact. 

\begin{theorem}
The \textsc{Chain Scheduling with Exact Delays} problem, parameterized by the number of chains $c$ is $W[1]$-complete, when there is one machine. 
\label{thm:chains-W1}
\end{theorem}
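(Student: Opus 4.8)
The theorem has two parts: membership in $W[1]$ and $W[1]$-hardness, both parameterized by the number of chains $c$, on a single machine with exact delays. I would attack hardness first, by reduction from \textsc{Independent Set}, which is $W[1]$-complete. Given a graph $G = (V,E)$ with $|V| = n$ and a parameter $k$, the plan is to build a \textsc{Chain Scheduling with Exact Delays} instance with $O(k)$ chains: roughly $k$ ``selection'' chains, one per vertex we wish to pick, plus a constant number of auxiliary/filler chains. Each selection chain should, by its choice of start time, encode which vertex of $V$ it selects; since delays are exact and the machine is single, the start times available to a chain form a controlled arithmetic-progression-like set, so a chain starting at time $t$ with $t \equiv v \pmod{\text{(something)}}$ can be made to correspond to picking vertex $v$. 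The conjunction of two jobs from different chains landing on the same time step is forbidden by the single machine, so I would design the gadgets so that two selection chains choosing vertices $u,v$ with $uv \in E$ are forced to collide on some time step, while non-adjacent choices never collide. This is essentially the same style of gadgetry as in Section~\ref{section:paramachines}: carve out, for each chain, ``teeth'' (short runs of zero-delay jobs separated by unit delays) placed at positions that depend on the selected vertex, so that for an edge $uv$ the tooth of one chain aligned to $u$ overlaps the tooth of another chain aligned to $v$. The number of chains stays $O(k)$ because we only have the $k$ selectors and $O(1)$ filler chains to enforce that distinct vertices are chosen and that the schedule has no ``room'' to cheat.

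The main obstacle I expect is the collision-encoding of edges: with only a single machine and exact delays, I must ensure that \emph{adjacent} choices are infeasible while \emph{all} non-adjacent choices remain feasible, and that there is always enough free space on the single machine to schedule everything when the choice is a genuine independent set. This is delicate because exact delays rigidly fix the entire schedule of a chain once its start time is fixed, so I have little slack; the design likely needs, for each pair of vertices (or each edge), a dedicated ``time window'' where the two relevant chains' teeth are positioned, and the total length of the instance becomes polynomial in $n$ and $|E|$ but the chain count remains $O(k)$. I would verify feasibility-implies-independent-set and independent-set-implies-feasibility as two separate claims, mirroring the two-claim structure used throughout Section~\ref{section:paramachines}, and check that all release dates, deadlines, and (unary) delays are polynomially bounded so the reduction is a genuine parameterized reduction.

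For membership in $W[1]$, the plan is \emph{not} to give a direct circuit reduction but to appeal to the standard characterization: a parameterized problem is in $W[1]$ if it can be expressed by a bounded-variable fragment / reduces to \textsc{Weighted $q$-CNF Satisfiability} for fixed $q$, or, more concretely, if it is expressible as ``guess $f(c)$ objects from a universe of polynomial size, then verify a property checkable by a constant-depth bounded-fan-in circuit.'' Here the objects to guess are the $c$ start times of the chains (one integer per chain from a polynomially bounded range, since deadlines are polynomial when delays are in unary); once all start times are guessed, the entire schedule is determined (exact delays), and checking that (i) each chain fits between its release date and deadline and (ii) no two jobs coincide on the single machine is a conjunction of polynomially many simple constraints, each depending on $O(1)$ of the guessed values. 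Encoding ``start time $= t$'' by setting one of polynomially many Boolean variables to true per chain, this yields a weft-$1$, bounded-depth circuit of polynomial size with a weight-$c'$ (for $c' = O(c)$) witness, placing the problem in $W[1]$; together with the hardness reduction this gives $W[1]$-completeness. I would be careful to state the guess as ``exactly one start-time variable true per chain,'' which is itself a $2$-CNF-style consistency condition, so the overall weft remains $1$.
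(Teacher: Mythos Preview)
Your membership argument is essentially the paper's: the paper reduces directly to \textsc{Independent Set} by taking a vertex $v_{C,t}$ for each chain $C$ and feasible start time $t$, making each chain's vertices a clique, and adding an edge between $v_{C,t}$ and $v_{C',t'}$ whenever those two starts collide on some time step. An independent set of size $c$ is then exactly a feasible schedule. Your weft-$1$/Weighted 2-CNF phrasing is the same construction in disguise, so that half is fine.

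The hardness half has a genuine gap. You have the right skeleton (reduce from \textsc{Independent Set}, use $k$ selection chains plus $O(1)$ auxiliary chains, a start-time-forcing chain to restrict each selector to $n$ possible starts, and one dedicated window per edge and per ordered pair of selectors), but the mechanism you propose for the collision check does not work. If selection chain $C_a$ starts at $t_a$ and in the window for edge $\{v_i,v_j\}$ and pair $(a,b)$ you place one job of $C_a$ at a fixed execution-time offset $d_a$ and one job of $C_b$ at offset $d_b$, then the two jobs coincide iff $t_a - t_b = d_b - d_a$. With start times ranging over $\{0,\dots,n-1\}$ this difference does \emph{not} determine the pair $(t_a,t_b)$, so the gadget also fires on many unintended pairs, ruling out legitimate independent sets. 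The ``teeth'' gadgets from Section~\ref{section:paramachines} do not rescue this: there, many chains and $m>1$ machines absorb the extra jobs, whereas here all $k$ selectors are live on a single machine, so packing $\Theta(n)$ jobs per chain per window creates collisions everywhere.

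The paper's fix is to force the start times to lie in a \emph{Golomb ruler} $\{s_0,\dots,s_{n-1}\}$ (constructed in polynomial time via the Erd\H{o}s--Tur\'an set $\{2pk + (k^2 \bmod p)\}$), so that every nonzero difference $s_a - s_b$ occurs for a unique ordered pair. The start-time-forcing chain leaves holes exactly at the ruler marks, and in each edge-window $C_a$ and $C_b$ each carry a single job at offsets $c_0 - s_i$ and $c_0 - s_j$; the Golomb property then guarantees that the two jobs coincide iff $(t_a,t_b)=(s_i,s_j)$ or $t_a=t_b$, both of which are exactly what must be forbidden. This Golomb-ruler idea is the crux of the reduction and is missing from your plan.
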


Theorem~\ref{thm:chains-W1} is proven by two transformations: from and to \textsc{Independent Set} with standard parameterization.

\begin{lemma}
The \textsc{Chain Scheduling with Exact Delays} problem with one machine, parameterized by the number of chains $c$, is $W[1]$-hard. 
\end{lemma}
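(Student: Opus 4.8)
The plan is to reduce from \textsc{Independent Set}: given a graph $G=(V,E)$ with $n$ vertices and $m$ edges, and an integer $k$, we build a \textsc{Chain Scheduling with Exact Delays} instance on one machine with $O(k)$ chains that has a feasible schedule if and only if $G$ has an independent set of size $k$. The basic idea mirrors the standard trick of using start times to encode a choice: I would fix an overall time horizon divided into $n$ blocks (one per vertex), give each of $k$ ``selector'' chains a release date, deadline and internal delay pattern that force the first job of selector chain $j$ to begin at the start of some block $i_j$, thereby selecting vertex $i_j$. As in the thickness proof, a set of $k$ chains with mismatched selections is enforced by auxiliary fill/blocking jobs so that at each time step only one selector starts, guaranteeing the $k$ selected vertices are distinct.

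The crux is enforcing the independence (non-edge) condition with only $f(k)$ chains and delays polynomial in $n+m$ (and given in unary). For each edge $\{u,v\}\in E$ I would allocate a distinct ``edge slot'' — a short time window — somewhere in the schedule, and design the selector chains so that selector chain $j$ keeps its machine busy (or idle) at the edge slot of $\{u,v\}$ exactly when $i_j\in\{u,v\}$: concretely, each selector chain contains, for every edge, a small gadget consisting of a run of $0$-delay jobs followed by a delay of $1$ and then more $0$-delay jobs, arranged so that the chain has a hole at the edge slot precisely if the chain did \emph{not} select an endpoint of that edge. Then one ``conflict-detector'' chain per edge (a single job released and due exactly at that edge slot, or a bundle of such jobs) must be placed into the common hole; if two selector chains both selected endpoints of the same edge, no hole exists and the detector job cannot be scheduled, making the instance infeasible. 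Picking the edge slots far enough apart (spacing governed by the maximum gadget length, which is $O(n)$) and padding with fill jobs keeps everything consistent and the number of chains at $k + O(1)$.

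The two directions of correctness are then routine: from an independent set $\{i_1,\dots,i_k\}$ we start selector chain $j$ at block $i_j$, verify the gadget holes line up so the detector jobs fit, and check the fill jobs absorb the remaining machine time, exactly as in the claims following Theorem~\ref{thm:thickness-m-machines-Wt}; conversely, a feasible schedule forces $k$ distinct selections (by the fill-chain counting argument) and forces each edge's detector job into a hole, which by the gadget design means at most one endpoint of each edge was selected, i.e.\ the selection is independent.

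I expect the main obstacle to be the careful bookkeeping of the gadget lengths and edge-slot placements so that (i) all delays remain polynomial and expressible in unary, (ii) the gadgets for different edges inside a single chain don't interfere — i.e.\ the ``hole vs.\ no hole'' behavior at edge slot $e$ depends only on whether the chain selected an endpoint of $e$ and not on earlier choices — and (iii) the chain's total length exactly matches a common deadline so that the start-time-encodes-selection property is tight. Getting a clean, uniform gadget that realizes ``busy iff selected an endpoint'' for an arbitrary edge, analogous to the positive/negative literal gadgets of Figure~\ref{fig:variable-gadgets}, is the technical heart of the construction; once that is in place, the number of chains is manifestly $k+O(1)$ and the reduction is polynomial, giving $W[1]$-hardness.
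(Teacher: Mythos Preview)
Your proposal has a genuine gap. As written, the construction uses one conflict-detector chain per edge, i.e.\ $m$ additional chains, and separate fill chains to force distinct selections; both are unbounded in $k$, contradicting your closing claim that the total is $k+O(1)$. More fundamentally, the gadget you sketch cannot work on a single machine even if you somehow merged all detectors into one chain: you propose that each of the $k$ selector chains carries, near every edge slot, a run of $0$-delay jobs with a single hole. But with one machine only one job can run per time step, so any two selector chains whose $0$-delay runs overlap already collide, regardless of which vertices they selected. The thickness construction you are borrowing from (Theorem~\ref{thm:thickness-m-machines-Wt}) crucially relies on having $k+t'$ machines to absorb these parallel runs and $m-1$ fill chains that do not count against the parameter. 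Here the parameter \emph{is} the number of chains, and you need each selector chain to place at most one job per checking interval; the real difficulty is then to ensure that two such isolated jobs collide exactly when the corresponding edge has both endpoints selected, and never by accident.

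The paper's device for this is a Golomb ruler: the $n$ allowed start times $s_0<\cdots<s_{n-1}$ are chosen so that all pairwise differences are distinct (an explicit ruler of size $n$ with entries $O(n^2)$ exists). One extra ``start time forcing'' chain occupies every time in $[1,c_0-1]$ except $s_1,\dots,s_{n-1}$, so each of the $k$ selector chains must start at some $s_i$; this simultaneously enforces distinctness without any fill chains. Then for every edge $\{v_i,v_j\}$ and every \emph{ordered pair} of selector chains $(C_a,C_b)$ a private interval of length $2c_0+1$ is reserved; in that interval only $C_a$ and $C_b$ have a job, placed at offsets $c_0-s_i$ and $c_0-s_j$ respectively. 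The two jobs coincide iff $s_{i_a}-s_i=s_{i_b}-s_j$, and the Golomb property forces this to mean $(i_a,i_b)=(i,j)$ or $i_a=i_b$. Thus feasibility is equivalent to the $k$ selected vertices forming an independent set, and the total number of chains is exactly $k+1$. The Golomb ruler is precisely the missing idea that lets you drop from ``busy run with a hole'' to ``a single job whose position encodes the selection'' without creating spurious collisions.
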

\begin{proof}
A set of integers $S$ is said to be a \emph{Golomb ruler} if all differences $a-b$ of two elements $a, b \in S$ are unique, that is, $s_1 - s_2 \neq s_3 - s_4$ for $s_1, s_2, s_3, s_4 \in S$ with $s_1 \neq s_2$ and $s_3 \neq s_4$.

Erd\"{o}s and Tur\'{a}n \cite{ErdosT41} gave the following explicit construction of a Golumb ruler. Let $p>2$ be
a prime number. Then the set $\{2pk + (k^2 \bmod p) ~|~ k\in \{0, 1, \ldots, p-1 \}\}$ is a Golomb ruler with $p$ elements.
We can build a Golumb ruler of size $n$ in $O(n \sqrt{n})$ time: with help of the Sieve of Eratosthenes, we find a 
prime number $p$ between $n$ and $2n$ (such a number always exist, by the classic postulate of Bertrand (see \cite{Aigner2001}), and then follow the Erd\"{o}s-Tur\'{a}n-construction with this value of $p$, and take the first $n$ elements of this set.
Notice that the elements in this set are smaller than $4n^2$.

Suppose we have an input of \textsc{independent set} $G=(V,E)$ and $k$. Assume $V = \{v_1, \ldots, v_n\}$ and let $m$ be the number of edges. 
First, build a Golumb ruler $S^n$ of size $n$. Denote the elements by $s_0 \leq s_1 \leq \ldots \leq s_{n-1}$. Notice that $s_0 = 0$. Write $c_0 = s_{n-1} + 1$. 

We will construct an instance of \textsc{Chain Scheduling with Exact Delays}. 
We will make $k+1$ chains. We call one chain the \emph{start time forcing chain}, the other $k$ chains are the \emph{vertex selection chains}.

The start time forcing chain has release date $1$, deadline $c_0$ and total execution time (including delays) $c_0-1$. I.e., it must start at time 1. The chain will have a job starting at every time in $[1, c_0-1]$ except the times $s_1, s_2, \ldots, s_{n-1}$. 

The vertex selection chains have release date $0$. They have deadline $c_0+T -1$ and total execution time $T$, where $T = (m\cdot k(k-1))(2c_0 + 1) + 1$. So, they can start at times $0, 1, \ldots, c_0-1$.
The vertex selection chains start with a job and then a delay of $c_0 - 1$. Note that as a result of this, in order not
to conflict with the start time forcing chain, they have to start at an element of $S^n$.

Now, for each edge $\{v_i,v_j\}\in E$, and each ordered pair of vertex selection chains $C_a, C_b$ with $a,b \in \{1,2,\ldots, k\}$ we dedicate an interval $I_{v_i, v_j, C_a, C_b}$ of $2c_0 +1$ time steps.
More precisely, we have $m\cdot k(k-1)$ intervals $[c_0 + i(2c_0+1), c_0 + (i+1)(2c_0 + 1)]$ for $i= 0, 1, \ldots, m\cdot k(k-1) - 1$. And to each interval we assign a unique label $I_{v_i, v_j, C_a, C_b}$ where $v_iv_j \in E$ and $a, b \in \{1, 2, \ldots, k\}$. In the interval $I_{v_i, v_j, C_a, C_b}$ we will check whether the chains $C_a$ and $C_b$ did not select the edge $v_iv_j$, that is, whether $C_a$ does not start at $s_i$ or $C_b$ does not start at $s_j$. 

We will now extend the vertex selection chains. Consider the interval $[c_0,c_0 + c_0 + (m\cdot k(k-1))(2c_0 + 1)]$ from left to right. For each interval $I_{v_i, v_j, C_a, C_b}$ that we encounter, we extend the vertex selection chains as follows. 
\begin{itemize}
    \item For each chain $C$, with $C \neq C_a$, $C\neq C_b$, add a delay of $2c_0 + 1$. 
    \item Add the following gadget to the chain $C_a$: a delay of $c_0 - s_i$, a job, and then a delay if $c_0 + s_i$. 
    \item Add the following gadget to the chain $C_b$: a delay of $c_0 - s_j$, a job, and then a delay if $c_0 + s_j$. 
\end{itemize}
Add one job at the end of all chains. See Figures \ref{fig:Golomb-ruler}, \ref{fig:interval-chains} and \ref{fig:example} for an example of the construction and a feasible schedule. 

\begin{figure}
    \centering
    \begin{tikzpicture}
    \node[vertex, label={$v_2$}] (a) at (0,0) {};
    \node[vertex, label={$v_1$}] (b) at (.86,.5) {};
    \node[vertex, label={$v_3$}] (c) at (.86,-.5) {};
    \draw (a) --(b);
    \draw (a) -- (c);
    \end{tikzpicture}
    \caption{An example graph $G$. If we pick $p = 3$, the corresponding Golomb ruler is $\{0, 7, 13\}$. }
    \label{fig:Golomb-ruler}
\end{figure}
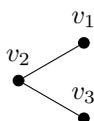
\begin{figure}
\includegraphics[width=\textwidth]{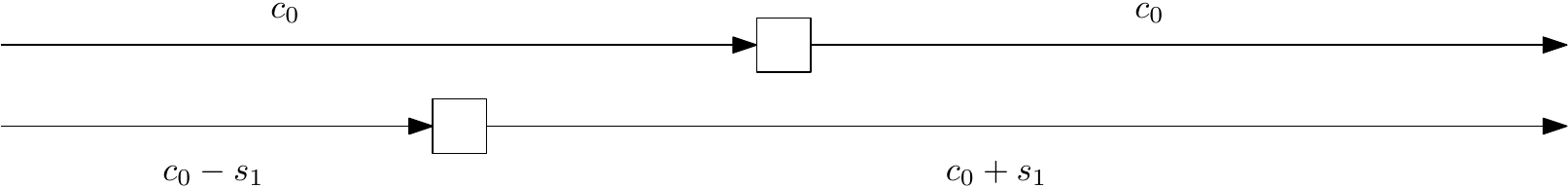}
\caption{The part of the chains $C_a$ and $C_b$ for the interval $I_{v_1, v_2, C_a, C_b}$ for the graph in Figure \ref{fig:Golomb-ruler}. } \label{fig:interval-chains}
\end{figure}
\begin{figure}
\includegraphics[width=\textwidth]{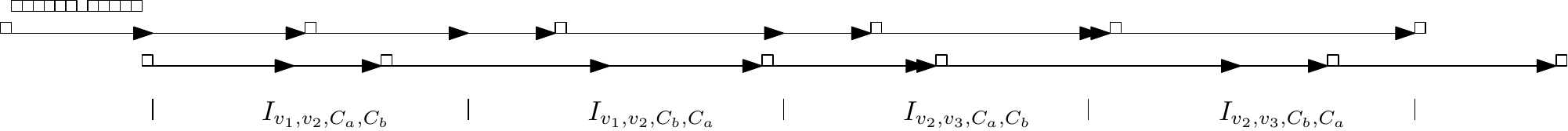}
\caption{The instance of the scheduling problem constucted from the graph in Figure \ref{fig:Golomb-ruler} and $k = 2$.} \label{fig:example}
\end{figure}

We claim that there is a feasible schedule, if and only if $G$ has an independent set of size at least $k$.

\begin{claim}
If $G$ has an independent set of size at least $k$, then there is a feasible schedule. 
\end{claim}
\begin{claimproof}
Suppose $v_{i_1}, \ldots, v_{i_k}$ form an independent set in $G$. Let the vertex selection chain $C_a$ start at times $s_{i_a}$ for $a = 1, 2, \ldots, k$. 

Notice that for the first $c_0$ time steps, at most one resource is used. The same holds for the last $c_0$ time steps. We show that this is a feasible scheduling by contradiction. Suppose that there are two resources needed at some time step $\beta$, and that $\beta$ is in the interval $I = I_{v_i, v_j, C_a, C_b}$ for checking whether the chains $C_a$ and $C_b$ do not select the edge $v_iv_j$. Write $I = [\ell(I), r(I)]$. 
Notice that job of $C_a$  in the interval $I$ starts at time $\ell(I) + c_0 - s_i + s_{i_a}$. And the job of $C_b$ in the interval $I$ starts at time $\ell(I) + c_0 - s_j + s_{i_b}$. Thus  $\ell(I) + c_0 - s_i + s_{i_a} = \beta = \ell(I) + c_0 - s_j + s_{i_b}$. Equivalently, $s_{i_a} - s_i = s_{i_b} - s_j$. Since $S^n$ is a Golomb ruler, it follows that either $s_{i_a} = s_i$ and $s_{i_b} = s_j$ or $s_{i_a} = s_{i_b}$ and $s_i = s_j$. 

In the first case, $s_{i_a} = s_i$ and $s_{i_b} = s_j$, we see that $v_i = v_{i_a}$ and $v_j = v_{i_b}$. But there is no edge $v_{i_a}v_{i_b}$, since $v_{i_a}$ and $v_{i_b}$ are in an independent set. This yields a contradiction. 

In the second case, $s_{i_a} = s_{i_b}$ and $s_i = s_j$, we see that $v_{i_a} = v_{i_b}$, but this yields a contradiction with the fact that the vertices of the independent set are distinct. 

We conclude that this schedule always uses at most one machine. 
\end{claimproof}

\begin{claim}
If there is a feasible schedule, then $G$ has an independent set of size at least $k$. 
\end{claim}
\begin{claimproof}
Suppose that there is a feasible schedule. Notice that the time of the first step of a vertex selection chains must be an element of $S^n$, 
otherwise the job conflicts with the start time forcing chain. 
Now, set $W = \{v_i \mid \textrm{there is a vertex selection chain that starts at time } s_i\}$. Notice that $|W|=k$, as otherwise
two vertex selection chains start at the same time, and conflict with each other for their first job. 

We prove that $W$ is an independent set by contradiction. Suppose that there exists an edge $v_iv_j$, with $v_i, v_j \in W$. Let $C_a$ be the chain that starts at $s_i$ and $C_b$ the chain that starts at $s_j$. Now consider the interval $I = I_{v_i, v_j, j_a, j_b}$, and write $I = [\ell(I), r(I)]$. By the construction of the chains, it follows that $C_a$ starts its job of the interval $I$ at time $\ell(I) + c_0 - s_i + s_i = \ell(I) + c_0$. The job of $C_b$ in the interval starts at time $\ell(I) + c_0$ as well. This yields a contradiction. We conclude that $W$ is an independent set.
\end{claimproof}
This shows that the \textsc{Chain Scheduling with Exact Delays} problem with a single machine, parametrized by the number of chains, is $W[1]$-hard. 
\end{proof}

\begin{lemma}
The \textsc{Chain Scheduling with Exact Delays} problem with one machine, parameterized by the number of chains $c$ is in $W[1]$. 
\end{lemma}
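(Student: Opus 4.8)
The plan is to give a polynomial-time parameterized reduction from \textsc{Chain Scheduling with Exact Delays} on one machine, parameterized by the number of chains $c$, to \textsc{Independent Set} parameterized by solution size, which lies in $W[1]$~\cite{DowneyF95II}. The key observation is that, because the delays are exact, a chain is \emph{rigid}: once we fix the time at which its first job starts, the start times of all its jobs are determined. Hence a schedule is nothing more than a choice of one starting time per chain, lying inside the release-date/deadline window of that chain, such that no two jobs from different chains occupy the same unit time step. Since delays are given in unary, the largest deadline $T$ is bounded by a polynomial in the input size, so each chain has only polynomially many admissible starting times and we can afford to enumerate them.

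Concretely, I would first, for each chain $C_i$, compute the set $S_i$ of feasible starting times: writing $L_i$ for the total processing time of $C_i$ including delays, we have $S_i = \{\, t \in \mathbb{Z} : r_{C_i} \le t \le d_{C_i} - L_i \,\}$. If some $S_i$ is empty the instance is a trivial no-instance, which we map to a trivial no-instance of \textsc{Independent Set}. Otherwise, for each $i$ and each $t \in S_i$, let $J_i(t) \subseteq \{0,1,\ldots,T-1\}$ be the set of unit time steps occupied by the jobs of $C_i$ when it starts at $t$; this set is read off by walking down the chain in polynomial time. Now build a graph $H$ whose vertex set is $\{(i,t) : 1 \le i \le c,\ t \in S_i\}$, and join $(i,t)$ to $(j,t')$ by an edge whenever either $i = j$ and $t \ne t'$ (a clique on the vertices of each chain), or $i \ne j$ and $J_i(t) \cap J_j(t') \ne \emptyset$ (a \emph{conflict} edge). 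Finally ask whether $H$ has an independent set of size $c$. Everything is computable in polynomial time, and the parameter of the target instance is exactly $c$.

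For correctness, observe that an independent set of size $c$ in $H$ contains at most one vertex $(i,\cdot)$ per chain $i$ because of the cliques, so with $c$ chains it contains exactly one $(i,t_i)$ per chain; the absence of conflict edges means the sets $J_i(t_i)$ are pairwise disjoint, so starting each $C_i$ at $t_i$ gives a feasible one-machine schedule (release dates and deadlines hold since $t_i \in S_i$, and precedence with exact delays holds automatically by rigidity). Conversely, any feasible schedule assigns each $C_i$ a starting time $t_i \in S_i$ with pairwise-disjoint occupied sets, and $\{(i,t_i) : 1 \le i \le c\}$ is then an independent set of size $c$ in $H$. With the membership of \textsc{Independent Set} in $W[1]$ this proves the lemma, and combined with the hardness lemma above it yields Theorem~\ref{thm:chains-W1}.

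There is no real obstacle here, only a few points that must be stated carefully: (i) the polynomial bound on $T$, and hence the polynomial size of $H$, genuinely uses that the delays are encoded in unary; (ii) the clique-per-chain gadget is what forces an independent set of size $c$ to select exactly one starting time per chain (equivalently, one could target \textsc{Multicolored Independent Set} directly); and (iii) one should note that a single rigid chain never conflicts with itself, since consecutive jobs are separated by a nonnegative delay plus the unit length of the earlier job, so conflict edges are only needed between distinct chains.
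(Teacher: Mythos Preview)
Your proposal is correct and follows essentially the same route as the paper: the paper also reduces to \textsc{Independent Set} by creating a vertex $v_{C,t}$ for every chain $C$ and every feasible starting time $t$, turning each chain's vertices into a clique, adding conflict edges between vertices of different chains whose occupied time steps intersect, and asking for an independent set of size $c$. Your write-up is in fact slightly more explicit than the paper's about why the construction is polynomial-size (the unary-delay assumption) and about the trivial no-instance case.
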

\begin{proof}
We show that the problem belongs to $W[1]$ by a transformation to
\textsc{Independent Set}. Suppose we are given an instance of the \textsc{Chain Scheduling with Exact Delays} problem with a single machine. 
We now build a graph as follows: for each chain $C$ and each possible starting time $t$ of this chain, we take one vertex $v_{C,t}$. We now add edges as follows: each pair of vertices that represent the same chain with different starting times is adjacent (i.e., for each chain, its vertices form a clique). For each
pair of vertices $v_{C,t}$, $v_{C',t'}$, $C\neq C'$, we add an edge if and only if starting chain $C$ on time $t$ and starting chain $C'$ on time $t'$ will cause that there is a time step where a job of both chains is scheduled.

It is not hard to see that the given instance of the \textsc{Chain Scheduling with Exact Delays} problem has a solution, if and only if the constructed graph has an independent set of size at least $k$ (with $k$ the number of chains). Indeed, if there is an independent set of size at least $k$, then for every chain $C$ there is a vertex $v_{C,t}$ in the independent set, since for every chain $C$ we can have at most one vertex $v_{C,t}$ in an independent set. Scheduling chain $C$ at time $t$ gives a feasible schedule. On the other hand, if we have a feasible schedule, then the set $\{v_{C,t} \mid C \textrm{ a chain }, t \textrm{ its starting time in the schedule}\}$ is an independent set of size $k$.
\end{proof}

\subsection{Constant number of parallel machines}
\label{subsection:constantmachines}
If we assume that there are $m$ machines, but $m$ is considered to be a
constant (i.e., not a fixed parameter; $m$ is part of the problem description), then the problem is 
$W[1]$-complete. 

\begin{theorem}
The \textsc{Chain Scheduling with Exact Delays} problem with $m$ machines, parameterized by the number of chains, is $W[1]$-complete. 
\label{thm:chains-constant-machines-W1}
\end{theorem}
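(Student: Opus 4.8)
The plan is to prove Theorem~\ref{thm:chains-constant-machines-W1} by the same two-sided strategy used for the single-machine case (Theorem~\ref{thm:chains-W1}): establish $W[1]$-hardness by a reduction \emph{from} \textsc{Independent Set}, and membership in $W[1]$ by a reduction \emph{to} \textsc{Independent Set}, with the number of chains as the parameter in both directions.

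For hardness, the simplest route is to reuse the single-machine construction from the previous subsection verbatim and then pad it up to $m$ machines. Concretely, I would take the instance produced by the $W[1]$-hardness reduction for one machine, which has $k+1$ chains and a maximum deadline $T$, and add $m-1$ ``filler'' chains, each consisting of $T$ jobs with all delays equal to $0$, release date $0$ and deadline $T$. Each filler chain is forced to occupy one machine at every time step in $[0,T)$, so the remaining single machine is available for the original $k+1$ chains, and the $m$-machine instance is feasible if and only if the single-machine instance is --- i.e.\ if and only if $G$ has an independent set of size $k$. The new parameter is $k+1+(m-1) = k+m$, which is still a function of $k$ alone since $m$ is a fixed constant, so this is a valid parameterized reduction. (This is exactly the padding trick already used in the thickness section to pass from one machine to a constant number of machines.)

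For membership in $W[1]$, I would generalize the ``conflict graph'' transformation to \textsc{Independent Set} used in the one-machine case. For each chain $C$ and each feasible starting time $t$ of $C$ we create a vertex $v_{C,t}$; vertices for the same chain form a clique (so any independent set picks at most one start time per chain); and for $C\neq C'$ we must decide when a pair of start times is ``compatible.'' With $m$ machines the obstruction is no longer ``two chains share a time step'' but ``$m+1$ chains share a time step,'' which is not a pairwise condition. The fix is to encode the choice of start times together with the machine assignment: since the number of chains $c$ is the parameter, I would instead build the graph on vertices indexed by an assignment of a start time \emph{and a machine number in $\{1,\dots,m\}$} to each chain --- or, more cleanly, argue that a schedule is feasible iff at every time step at most $m$ chains are active, and turn this into a $W[1]$ check via a bounded-depth circuit on the $c$ start-time variables. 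Alternatively, and most in the spirit of the paper, one can note that the problem is FPT-reducible to \textsc{Independent Set} on a graph whose vertices are the (start time, machine) pairs per chain: adjacency forbids (i) two vertices of the same chain, and (ii) two vertices of different chains using the same machine at a common time step; a feasible schedule corresponds exactly to an independent set hitting every chain, hence of size $c$.

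The main obstacle is precisely this last point: unlike the single-machine case, the feasibility constraint for $m\ge 2$ machines is genuinely $(m{+}1)$-ary rather than binary, so the naive conflict-graph reduction to \textsc{Independent Set} breaks. I expect the cleanest resolution is the (start time, machine) encoding sketched above, which restores a pairwise conflict structure at the cost of multiplying the number of vertices per chain by $m$ (a constant), keeping the parameter bounded; one then verifies that hitting every chain with an independent set is equivalent to producing a valid machine-respecting schedule. Once both directions are in place, $W[1]$-completeness follows, and the theorem is proved.
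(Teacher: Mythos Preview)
Your hardness argument is correct and is exactly what the paper does: pad the single-machine instance from Theorem~\ref{thm:chains-W1} with $m-1$ full-length zero-delay chains; since $m$ is a fixed constant, the parameter grows from $k+1$ to $k+m$ and the reduction is valid.

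Your membership argument, however, has a real gap. The (start time, machine) encoding you settle on forces every job of a chain to run on one fixed machine, which is strictly stronger than what the problem requires, and it can reject feasible instances. Take $m=2$ and three chains whose release dates and deadlines pin each to a unique start time so that $C_1$ has jobs at times $\{1,3\}$, $C_2$ at $\{1,2\}$, and $C_3$ at $\{2,3\}$. Exactly two jobs run at each of the times $1,2,3$, so the instance is feasible on two machines. But a per-chain machine assignment would be a proper $2$-colouring of the triangle on $\{C_1,C_2,C_3\}$ (every pair shares a time step), which does not exist; hence your conflict graph has no independent set of size $3$. The ``feasible schedule $\Rightarrow$ independent set'' direction of your reduction fails, so it does not establish $W[1]$ membership.

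The paper does not try to squeeze the $(m{+}1)$-ary obstruction into a binary conflict graph at all. It reduces instead to \textsc{Weighted CNF-SAT}: variables $x_{C,t}$ for each chain/start-time pair, one monotone clause $\bigvee_t x_{C,t}$ per chain, and for every time $t'$ and every $(m{+}1)$-subset $X$ of variables whose choice would place a job at $t'$, the antimonotone clause $\bigvee_{x\in X}\neg x$. With the weight set equal to the number of chains, exactly one variable per chain is true and satisfiability coincides with feasibility; the negative clauses have constant size $m{+}1$, which (together with the partition structure of the positive clauses) is what keeps the instance inside $W[1]$. Your passing remark about a ``bounded-depth circuit on the $c$ start-time variables'' points in this direction and is salvageable; the machine-assignment graph is not.
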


Hardness follows easily from the case that $m=1$: add $m-1$ chains with maximum length that consist of jobs with 0 delay. 

Membership follows by formulating the problem as a \textsc{Weighted CNF-SAT} problem, i.e., giving a CNF-SAT formula 
that has a solution with at most $k$ (parameter) variables true, if and only if the scheduling problem has a solution.

We have a variable for each chain $C$ and each time $t$ that it can start, say $x_{C,t}$.

We have a number of clauses that force that each chains starts at some time: for each chain $C$, we have clause $\bigvee x_{C,t}$, ranging over all starting times $t$ of chain $C$. We call those clauses \emph{chain clauses}. 
This forces that each chain has one of its variables to be true. 

The parameter $k$ is set to the number of chains. It follows that for each chain, exactly one of its variables is true. 

Then, we have clauses that check that we never use more than $m$ machines.
For each time $t'$, look at the set  \begin{align*}S_{t'} = \{x_{C,t} \mid \text{if $C$ starts at time $t$, then one of its jobs starts at time $t'$} \}.\end{align*} For each subset $X \subset S_{t'}$ of $m+1$ of these variables, we take a clause $\bigvee_{x_{C,t}\in X} \neg x_{C,t}$. We call those clauses the \emph{time clauses}. 

\begin{claim}
Suppose there is a solution for the \textsc{Weighted CNF-SAT} instance, then the \textsc{Chain Scheduling with Exact Delays} scheduling problem has a solution.
\end{claim}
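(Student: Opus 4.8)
The plan is to convert a satisfying weight-$k$ assignment directly into a feasible schedule. First I would pin down the shape of the assignment: each chain clause $\bigvee_t x_{C,t}$ forces at least one variable of chain $C$ to be true, and since there are $c=k$ chains this already accounts for at least $k$ true variables; as the assignment has weight at most $k$, there is \emph{exactly} one true variable $x_{C,t_C}$ for every chain $C$, and distinct chains have distinct true variables. So the assignment encodes a choice of one starting time $t_C$ per chain.

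Next I would define the candidate schedule: start chain $C$ at time $t_C$ and let the remaining jobs of $C$ follow according to its prescribed delays. Since $t_C$ is by definition one of the feasible starting times of $C$ (that is precisely the set of times $t$ for which a variable $x_{C,t}$ was created), the first job of $C$ is at or after $r_C$, the last job is at or before $d_C$, and all exact-delay constraints inside $C$ hold automatically. Hence the only remaining obligation is the machine bound.

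For the machine bound I would argue by contradiction. Suppose some time step $t'$ carries at least $m+1$ jobs in this schedule. The jobs of a single chain occur at pairwise distinct time steps (unit length, non-negative delays, so successor jobs start strictly later), so these $m+1$ jobs come from $m+1$ distinct chains $C$, each started at $t_C$; by the definition of $S_{t'}$ this means $x_{C,t_C}\in S_{t'}$, and each such variable is true. Thus $S_{t'}$ contains at least $m+1$ true variables, and picking a subset $X\subseteq S_{t'}$ of exactly $m+1$ true variables, the corresponding time clause $\bigvee_{x_{C,t}\in X}\neg x_{C,t}$ has all its literals false, contradicting that the assignment satisfies every clause. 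Therefore every time step carries at most $m$ jobs, the schedule meets all demands, and the claim follows.

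I expect no serious obstacle; the only point needing a little care is the bookkeeping that a chain contributes at most one job to any time step, so that the number of jobs scheduled at $t'$ equals the number of true variables in $S_{t'}$, and that the variable set $\{x_{C,t}\}$ of a chain coincides with its set of feasible starting times (so release dates and deadlines are respected for free). The converse direction, turning a feasible schedule into a satisfying weight-$k$ assignment via $\{x_{C,t}\mid C\text{ starts at }t\}$, is symmetric and is handled in the companion claim.
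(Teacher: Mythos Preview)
Your proposal is correct and follows essentially the same approach as the paper's proof: both read off one starting time per chain from the true variables (using the chain clauses plus the weight bound to get exactly one per chain), and both derive the machine bound by contradiction via a violated time clause. Your version is slightly more careful than the paper's in making explicit that jobs of a single chain sit at pairwise distinct time steps and that release dates and deadlines are automatically respected because the variables $x_{C,t}$ were only created for feasible starting times; the paper leaves these points implicit.
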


\begin{proof}
Let $x_{C_1, t_1}, x_{C_2, t_2}, \ldots, x_{C_k, t_k}$ be a solution for the CNF-SAT instance. The chain clauses garantee that for every chain $C$ there is at least one variable $x_{C, t}$ true. Since $k$ equals the number of chains, we have that for every chain $C$ there is exactly one variable $x_{C, t}$ true. Let chain $C_i$ start at time $t_i$ for all $i = 1, 2, \ldots k$. 

We show that this is a feasible schedule by contradiction. Suppose that at time $t$ more than $m$ jobs are scheduled to start. Let $X$ be the set of chains that those jobs are in. Let $X' \subseteq X$ be a subset of size $m+1$. Then the time clause $\bigvee_{C_i\in X'} \neg x_{C_i,t_i}$ is not satisfied. This yields a contradiction. 
\end{proof}

\begin{claim}
If there exists a solution for the scheduling problem, then we have a solution for the \textsc{Weighted CNF-SAT} instance.
\end{claim}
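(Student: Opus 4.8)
The plan is to read the satisfying assignment directly off the schedule. Given a feasible schedule, every chain $C$ is started at exactly one time $t_C$ within its allowed window; I would set $x_{C,t_C}$ to \emph{true} for every chain $C$, and set all remaining variables to \emph{false}. This produces an assignment of weight exactly equal to the number of chains, which is the chosen parameter $k$, so it remains only to verify that every clause is satisfied.

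For the chain clauses the check is immediate: the clause associated with chain $C$ is $\bigvee_t x_{C,t}$, and it contains the literal $x_{C,t_C}$, which is true. For the time clauses I would argue by contradiction. Fix a time $t'$ and a subset $X \subseteq S_{t'}$ with $|X| = m+1$, yielding the clause $\bigvee_{x_{C,t}\in X}\neg x_{C,t}$, and suppose all $m+1$ variables of $X$ were true. First observe that these variables belong to $m+1$ pairwise distinct chains: if $x_{C,t}$ and $x_{C,t''}$ with $t\neq t''$ were both in $X$ and both true, chain $C$ would start at two different times, contradicting that the schedule starts each chain once. Next, for each $x_{C,t}\in X$, its being true means $C$ starts at time $t$, and by the definition of $S_{t'}$ this forces one job of $C$ to start at time $t'$. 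Hence $m+1$ distinct chains each place a job at time $t'$, so at least $m+1$ jobs run simultaneously, contradicting the availability of only $m$ machines. Therefore at least one variable of $X$ is false, and the time clause is satisfied.

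Since all chain clauses and all time clauses are satisfied by a weight-$k$ assignment, the \textsc{Weighted CNF-SAT} instance has a solution, completing the claim. I do not anticipate a real obstacle in this direction; the only point needing care is the bookkeeping that each chain contributes exactly one true variable, which is precisely what guarantees the distinctness of the $m+1$ chains in a would-be violated time clause, and this follows at once from the schedule assigning each chain a single start time.
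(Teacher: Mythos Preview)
Your proposal is correct and follows essentially the same approach as the paper: set $x_{C,t_C}$ true for each chain's actual start time, observe that this gives weight exactly $k$, and check the chain clauses directly and the time clauses via the machine bound. The paper phrases the time-clause argument slightly more directly (at most $m$ variables in $S_{t'}$ are true, so any $(m{+}1)$-subset contains a false one), whereas you unwind the same fact by contradiction with an explicit distinctness check, but the content is identical.
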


\begin{proof}
Suppose we have a feasible schedule. For each chain $C$ set the variable $x_{C, t}$ to true, where $t$ is the starting time of $C$ in the schedule. Notice that there are exactly $k$ variables set to true. 

For every chain $C$ we set one variable $x_{C, t}$ to true, so the  chain clauses are satisfied. 

For every time $t$, at most $m$ jobs start at $t$. Consider the set $S_{t}$. At most $m$ variables in this set are set to true. So, for every subset $X \subseteq S_{t}$ of $m+1$ variables, there is at least one variable false. Hence, the clause is satisfied.  
\end{proof}

\subsection{Variable number of parallel machines}
\label{subsection:variablemachines}
In this section, we show that \textsc{Chain Scheduling with Exact Delays} problem with $m$ machines, where $m$ is part of the input, is 
$W[2]$-complete. 

\begin{theorem}
The \textsc{Chain Scheduling with Exact Delays} problems with a variable number of machines, parameterized by the number of chains $c$, is $W[2]$-complete. 
\label{thm:chains-m-machines-W2}
\end{theorem}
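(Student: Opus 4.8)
The plan is to establish the two directions separately: $W[2]$-hardness by a reduction from \textsc{Dominating Set} (which is $W[2]$-complete by the preliminaries), and membership in $W[2]$ by an fpt-reduction to \textsc{Weighted CNF-SAT}. Both reductions follow the template used for a constant number of machines, with two twists: the hardness reduction starts from a source problem whose ``OR-arity'' (domination) is unbounded, and the membership reduction must produce clauses of unbounded size whose \emph{number} is nevertheless controlled by the parameter.

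\textbf{Hardness.} Given a \textsc{Dominating Set} instance $(G,k)$ with $V(G)=\{v_0,\dots,v_{n-1}\}$, I would build a scheduling instance with $m=k$ machines and $k+1$ chains: $k$ identical \emph{selection chains} and one \emph{forcing chain} $D$. A selection chain has release date $0$ and a deadline making its possible starting times exactly $0,1,\dots,n-1$; starting at time $p$ is read as ``this chain picks $v_p$''. For each vertex $v_j$ I reserve a time window (the windows have width $\Theta(n)$, so that shifting a chain's start by at most $n-1$ never moves any of its jobs out of its window or onto a neighbouring window's distinguished instant), and inside it a \emph{check time} $T^*_j$. Inside the window for $v_j$ I extend every selection chain by the same block of exact delays, chosen so that a selection chain started at time $p$ has a job exactly at $T^*_j$ if and only if $p\in P_j:=\{p : v_p\notin N[v_j]\}$; concretely, for a chain conceptually started at time $0$ this block places jobs at the times $\{T^*_j-p : p\in P_j\}$, so shifting the start to $p$ slides the job indexed by $p$ (and only that one) onto $T^*_j$. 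The forcing chain $D$ is forced by its release date and deadline to place exactly one job on each $T^*_j$ and no other jobs. Then at any time that is not a check time at most $k=m$ jobs are scheduled (one per selection chain, none from $D$), whereas at $T^*_j$ the number of scheduled jobs is $1+|\{a : v_{p_a}\notin N[v_j]\}|$, which exceeds $m=k$ exactly when none of the $k$ picked vertices dominates $v_j$. Hence the instance is feasible iff the $k$ picked vertices (a multiset, which is fine since a dominating multiset of size $k$ exists iff $G$ has a dominating set of size at most $k$) dominate $G$. The construction runs in polynomial time and uses $k+1$ chains, so it is an fpt-reduction, giving $W[2]$-hardness.

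\textbf{Membership.} For membership I would reduce to \textsc{Weighted CNF-SAT}. Introduce a variable $x_{C,t}$ for every chain $C$ and every feasible starting time $t$ of $C$. For each chain $C$ add the \emph{chain clause} $\bigvee_t x_{C,t}$. For each relevant time $t'$ and each $(m+1)$-element set $J$ of chains, add the \emph{time clause} consisting of the disjunction, over all $C\in J$ and all feasible starting times $t$ of $C$ for which starting $C$ at $t$ puts no job at $t'$, of the variables $x_{C,t}$; intuitively, ``among any $m+1$ chains, at least one misses $t'$''. Take the target weight to be $c$. A weight-$c$ assignment satisfying all chain clauses selects exactly one starting time per chain, and it also satisfies all time clauses iff no $m+1$ chains simultaneously hit any time step, i.e.\ iff the corresponding schedule never uses more than $m$ machines; so the formula has a weight-$c$ satisfying assignment iff the scheduling instance is feasible. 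The key observation is that one forms $(m+1)$-subsets of the \emph{chains}, not of the variables: there are only $\binom{c}{m+1}\le 2^{c}$ of them per time step, and (as delays are given in unary) only polynomially many relevant time steps, so the whole reduction runs in fpt time; and since the time clauses may now have unbounded size, one lands in $W[2]$ rather than in $W[1]$ (which was the outcome for a constant number of machines).

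\textbf{Where the difficulty lies.} For hardness, the delicate part is the simultaneous design of the per-vertex gadgets \emph{inside the shared selection chains}: one must guarantee at once that (i) every selection chain can freely choose any of the $n$ offsets, (ii) the gadget for $v_j$ deposits a job on $T^*_j$ for precisely the offsets in $P_j$, and (iii) the many ``off-centre'' jobs the gadgets produce for mismatched offsets never cause more than $m=k$ jobs at a non-check time --- which is exactly why the number of machines is set to $k$ and $D$ contributes a single job only at each check time. For membership, the one point that must be gotten right is the counting above showing that taking $(m+1)$-subsets of chains (rather than of the set $S_{t'}$ of relevant variables, as in the constant-$m$ proof) keeps the reduction fixed-parameter tractable.
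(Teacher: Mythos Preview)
Your hardness reduction is essentially the paper's: the paper also builds $k+1$ chains on $k$ machines, with $k$ identical vertex-selection chains whose start in $\{0,\dots,n-1\}$ encodes a chosen vertex, and a single ``check chain'' (your forcing chain $D$) that deposits one job at each per-vertex check time; the gadget inside the selection chains makes a chain avoid the check time for $v_i$ exactly when its chosen vertex lies in $N[v_i]$. Your remark about multisets is apt and matches the paper's treatment.

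For membership you take a genuinely different route. The paper reduces to \textsc{Threshold Dominating Set} (shown $W[2]$-complete by Downey and Fellows): it builds a graph with vertices $x_{C,t}$ for chain/start pairs and vertices $y_{t'}$ for time steps, with an edge from $x_{C,t}$ to $y_{t'}$ when starting $C$ at $t$ \emph{misses} $t'$; auxiliary gadget vertices force any small threshold dominating set to pick exactly one $x_{C,t}$ per chain, and then every $y_{t'}$ being dominated at least $c-m$ times encodes that at least $c-m$ chains miss each time step. Your \textsc{Weighted CNF-SAT} reduction is a clean alternative that directly extends the constant-$m$ argument: the twist of taking $(m{+}1)$-subsets of \emph{chains} rather than of $S_{t'}$, and expressing ``chain $C$ misses $t'$'' as the positive disjunction $\bigvee_{t} x_{C,t}$ over miss-times (valid because the weight bound and chain clauses pin down exactly one start per chain), is precisely what keeps the clause count at most $2^{c}\cdot\mathrm{poly}$ while letting clause sizes become unbounded, so the formula sits in $W[2]$. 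Both arguments are correct; yours is more uniform with the $W[1]$ membership proof for constant $m$, while the paper's avoids the $2^{c}$ blow-up in the number of clauses at the cost of an extra layer of gadget vertices and a case distinction on $c-m$.
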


We prove this by reductions from and to \textsc{Dominating Set} problems. 

\begin{lemma}
The \textsc{Chain Scheduling with Exact Delays} problems with a variable number of machines, parameterized by the number of chains $c$, is $W[2]$-hard. 
\end{lemma}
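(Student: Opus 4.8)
The plan is to reduce from \textsc{Dominating Set}, which is $W[2]$-complete. Given a graph $G=(V,E)$ with $V=\{v_1,\ldots,v_n\}$ and a parameter $k$, I will build an instance of \textsc{Chain Scheduling with Exact Delays} with roughly $k$ chains (plus a constant number of auxiliary ``forcing'' chains), a variable number $m$ of machines, and delays given in unary. As in the single-machine $W[1]$-hardness proof, I would use a Golomb ruler $S^n=\{s_0<\cdots<s_{n-1}\}$ so that the starting time of a chain encodes a choice of vertex: each of the $k$ \emph{dominator selection chains} begins with a job followed by a fixed delay, and a \emph{start time forcing chain} (a long chain with a job at every time step except at positions $s_1,\ldots,s_{n-1}$) forces each selection chain to start at some $s_i$, i.e. to ``pick'' a vertex $v_i$. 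Picking $k$ vertices this way corresponds to choosing a candidate dominating set $D$ (with repetition harmless, or suppressed by a small gadget if one prefers $|D|=k$ exactly).

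The heart of the construction is a sequence of \emph{verification intervals}, one for each vertex $v_j\in V$; the purpose of interval $I_{v_j}$ is to check that $v_j$ is dominated, i.e. that some selected chain started at an $s_i$ with $v_i\in N[v_j]$. In each interval I set the number of available machines so that exactly one ``slack'' time step is free after the mandatory jobs of the forcing chains are placed; then, within $I_{v_j}$, I extend each selection chain $C_a$ by a gadget whose single job lands in that free slot precisely when $C_a$ started at some $s_i$ with $v_i\in N[v_j]$, and lands on an already-occupied step otherwise. Because the slot has capacity one, a feasible schedule exists within $I_{v_j}$ iff \emph{at least one} selected chain contributes to dominating $v_j$ — which is exactly the ``OR over neighbours'' that a dominating-set check needs (contrast the independent-set reduction, where the interval enforced an ``AND/forbid'' condition by making a collision fatal). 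The offsets inside the gadget are chosen, using the Golomb property, so that the job of $C_a$ in $I_{v_j}$ is at the free slot exactly when $s_{i_a}-s_i=0$ for the relevant $i\in N[v_j]$, with no accidental coincidences; the arithmetic is the same flavour as in the $W[1]$ proof. Membership in $W[2]$ then follows by the dual reduction: express the scheduling instance as a \textsc{Weighted CNF-SAT} formula with a variable $x_{C,t}$ for each chain $C$ and each admissible start time $t$, chain clauses $\bigvee_t x_{C,t}$, and at-most-$m$ clauses over each conflict set; since \textsc{Weighted CNF-SAT} with no bound on clause width is $W[2]$-complete, this places the problem in $W[2]$, matching the hardness.

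For the \textbf{correctness claims} I would prove the two directions separately. If $D$ is a dominating set of size $k$, schedule selection chain $C_a$ to start at $s_{i_a}$ where $D=\{v_{i_1},\ldots,v_{i_k}\}$; in each interval $I_{v_j}$, domination gives some $v_{i_a}\in N[v_j]$, so $C_a$'s job occupies the unique free slot and every other selection chain's job in that interval lands on an occupied-by-construction step or is absorbed by the machine count, so at most $m$ jobs run at every time step. Conversely, from a feasible schedule, the forcing chain forces each selection chain to start at an $s_i$, defining a multiset $D$ of $k$ vertices; if some $v_j$ were undominated, then in $I_{v_j}$ none of the selection chains' jobs would hit the free slot, they would instead all collide on the mandatory steps, exceeding $m$ — contradiction — so $D$ dominates $G$.

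The \textbf{main obstacle} I anticipate is bookkeeping the machine counts and interval lengths so that each verification interval has exactly the right amount of slack: I need the forcing chains (analogues of the ``fill'' chains in the thickness proof) to occupy a precisely controlled number of steps in each interval, leaving one free step whose capacity is the sole bottleneck, while simultaneously ensuring the boundary steps between consecutive intervals and the initial/final $c_0$ steps never overflow. Getting the delays — all in unary, so polynomially bounded, which forces the Golomb ruler entries to stay $O(n^2)$ and the number of intervals to stay $O(n)$ — to line up with these machine counts is the delicate part; everything else is a routine adaptation of the $W[1]$ argument with ``forbid a collision'' replaced by ``require a collision-free slot to be usable,'' i.e. conjunction replaced by disjunction, which is exactly the syntactic difference between \textsc{Independent Set} and \textsc{Dominating Set}.
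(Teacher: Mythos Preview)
Your reduction from \textsc{Dominating Set} is the right starting point, but the verification gadget as you describe it cannot enforce the domination constraint. You say each selection chain contributes a \emph{single} job in the interval $I_{v_j}$, and that this job ``lands in that free slot precisely when $C_a$ started at some $s_i$ with $v_i\in N[v_j]$.'' With exact delays the job's position is $\text{(start time)}+\text{(fixed offset)}$, so it hits a \emph{fixed} slot if and only if the start time takes one particular value, not an arbitrary member of the set $\{s_i: v_i\in N[v_j]\}$; and if instead you provide one free slot per neighbour, your reverse argument breaks. Concretely, suppose the $k$ chains start at distinct Golomb positions $s_{i_1},\ldots,s_{i_k}$ with no $v_{i_a}\in N[v_j]$. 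Their $k$ jobs in $I_{v_j}$ then sit at $k$ \emph{pairwise distinct} occupied steps (the Golomb property guarantees this), so there is no ``collision'': each such step carries at most one fill-chain job plus one selection job, and no single step exceeds $m$. The schedule is feasible even though $v_j$ is undominated. A per-slot capacity bound simply cannot distinguish ``$k$ jobs in $k$ distinct bad slots'' from ``$k$ jobs with at least one in a good slot'' --- the Golomb ruler, which was the key tool in the \textsc{Independent Set} reduction precisely because it \emph{prevents} coincidences, works against you here.

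The paper's construction is essentially the dual of what you sketch. It uses $k$ machines and $k{+}1$ chains with no Golomb ruler: the $k$ identical \emph{vertex selection chains} start at a time in $\{0,\ldots,n{-}1\}$ (directly encoding a vertex, repetitions allowed), and one rigid \emph{check chain} places a job at time $(i{+}1)n-1$ for each $i=1,\ldots,n$. The selection chains are built so that at time $(i{+}1)n-1$ a chain has a \emph{gap} iff its chosen vertex lies in $N[v_i]$, and a job otherwise. Thus at each check time the check-chain job plus the $k$ selection chains give $k{+}1$ simultaneous jobs unless at least one selection chain has a gap there; with only $k$ machines, feasibility forces such a gap, i.e.\ domination of $v_i$. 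The crucial structural difference is that the bottleneck is a \emph{single common time step} at which all selection chains are active by default, so the pigeonhole is over chains rather than over slots. (Also, the lemma asserts only $W[2]$-hardness; your membership paragraph belongs to the companion lemma.)
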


\begin{proof}
Let  $G=(V,E)$, $k$ be an instance of \textsc{Dominating Set}. Assume that $V = \{v_1, v_2, \ldots, v_{n}\}$. 
We will make an instance of the scheduling problem with $k$ machines and $k+1$ chains. 

The first chain will have release date $2n-1$ and deadline $n(n+1)$. It will consist of $n$ jobs, with delay $n-1$ between each pair of consecutive jobs. Notice that this chain has to start at time $2n-1$. We will call this chain the \emph{check chain}. 

The other $k$ chains will be identical, we call them the \emph{vertex selection chains}. They have release date $0$ and deadline $(n+1)n + n - 1$. They will have total execution time $(n+1) n$, so they can start at times $0, 1, \ldots, n-1$. 
Start each chain with a job and then a delay of $n-1$. 

Consider the interval $[n, (n+1)n]$ from left to right. For each integer $in + j$ with $0 \leq j \leq n-1$, add the following to the vertex selection chains:
\begin{itemize}
    \item a delay of $1$ if $i = n-j$, 
    \item a delay of $1$ is $v_{i} \sim v_{n-j}$, 
    \item a job otherwise. 
\end{itemize}
Notice that for each interval $[in, (i+1)n]$, we made a gadget the represents the adjacencies of vertex $v_i$: there is no job at time $(i + 1)n - j$ of the execution of the chain if $i = j$ or $v_i \sim v_{j}$ for $j = 1, 2, \ldots, n$. 
See Figure \ref{fig:chains-parallel-machines}.

\begin{figure}
\includegraphics{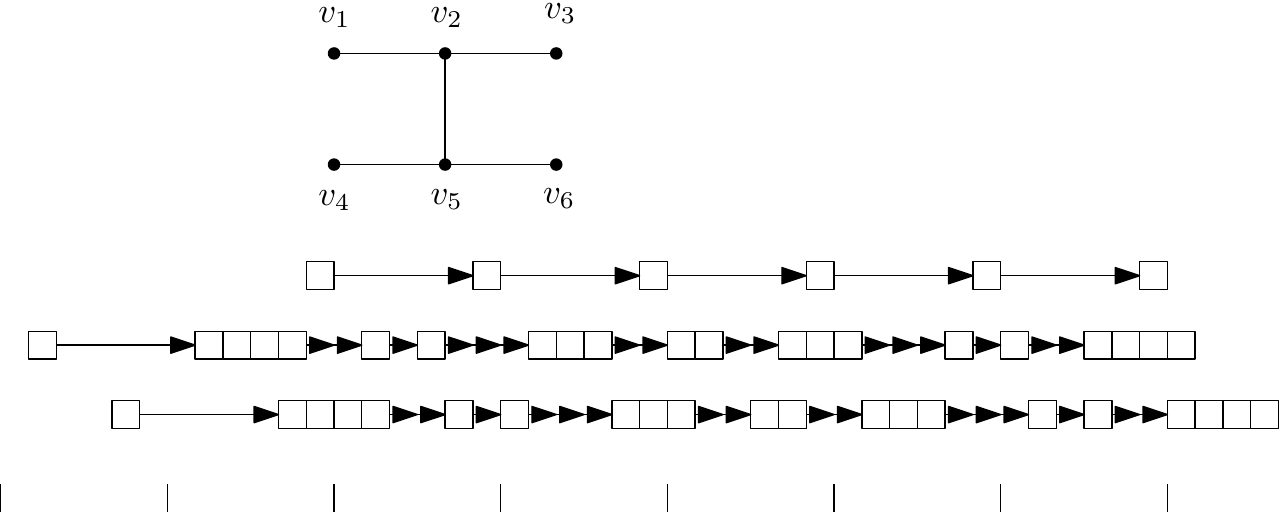}
\caption{A graph and the corresponding instance of the scheduling problem with $k = 2$.} \label{fig:chains-parallel-machines}
\end{figure}

\begin{claim}
If $G$ has a dominating set of size at most $k$, then there is a feasible schedule. 
\end{claim}
\begin{claimproof}
Let $v_{j_1}, v_{j_2}, \ldots, v_{j_k}$ be a dominating set. 
Let the vertex selection chains start at times $j_1 - 1$, $j_2 - 1$, $\ldots$, $j_k - 1$. We will show that this is a feasible schedule. 

Notice that at each time that is not of the form $in - 1$ at most $k$ jobs are scheduled to start, since the check chain has no job starting at this time. 

Now consider the time $in - 1$, for some $i = 1, 2, \ldots, n$. The check chain has a job scheduled to start at this time. We will show that there is a vertex selection chain that has no job starting at this time. 
Let $v_{j_a}$ be the vertex that dominates $v_i$. Then we know that $i = j_a$ or $v_i \sim v_{j_a}$. So the vertex selection chains have a delay of $1$ starting at time $(i+1) n - j_a$ of their execution time. Let $C$ be the chain that starts at time $j_a - 1$. It follows that $C$ has a delay of $1$ starting at time $(i+1) n - j_a + (j_a - 1) = (i+1) n - 1$. 
\end{claimproof}

\begin{claim}
If there is a feasible schedule, then $G$ has a dominating set of size at most $k$. 
\end{claim}
\begin{claimproof}
Let $j_1, j_2, \ldots, j_k$ be the starting times of the vertex selection chains in a feasible schedule. Consider the set $U = \{v_{j_1 + 1}, v_{j_2 + 1}, \ldots, v_{j_k + 1}\}$. We will show that this is a dominating set. 

Let $v_i$ be a vertex. Consider the time $(i+1)n - 1$. Notice that a job of the check chain is scheduled to start at this time, so one of the vertex selection chains $C$ has no job starting at this time. Let $j_a$ be the starting time of the chain $C$. It follows that $C$ has no job at time $(i+1)n - 1 - j_a$ of its execution time.  Hence $j_a + 1 = i$ or $v_{j_a + 1} \sim v_i$. We conclude that $v_i$ is dominated. 
\end{claimproof}
We conclude that the \textsc{Chain Scheduling with Exact Delays} problem with a variable number of machines is $W[2]$-hard, when parametrized by the number of chains. 
\end{proof}

\begin{lemma} \label{lem:chains-in-W2}
The \textsc{Chain Scheduling with Exact Delays} problems with a variable number of machines, parameterized by the number of chains $c$, is in $W[2]$. 
\end{lemma}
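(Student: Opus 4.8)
The plan is to give an FPT reduction from \textsc{Chain Scheduling with Exact Delays} with a variable number of machines, parameterized by the number of chains $c$, to \textsc{Weighted CNF-SAT}, i.e.\ to \textsc{Weighted $2$-Normalized Satisfiability}, which lies in $W[2]$ by the theorem of Downey and Fellows. Since the formula I will build is monotone, one may equivalently view it as an instance of (weighted) \textsc{Hitting Set} and compose with the folklore reduction \textsc{Hitting Set} $\leq_{\mathrm{FPT}}$ \textsc{Dominating Set} to land directly in \textsc{Dominating Set}. First I would record two easy observations: because delays are given in unary, each chain $C$ has only polynomially many \emph{valid starting times} $T_C$; and if $m\geq c$ the instance is trivially a yes-instance (any placement of each chain inside its window is feasible, since at most $c\leq m$ jobs ever run simultaneously), so we may assume $m\leq c-1$, i.e.\ $m+1\leq c$.

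The formula has a variable $x_{C,t}$ for every chain $C$ and every $t\in T_C$, with target weight $k:=c$. For each chain $C$ a \emph{chain clause} $\bigvee_{t\in T_C}x_{C,t}$ forces at least one starting time to be chosen; since the $c$ chain clauses are over pairwise disjoint variable sets, together with weight exactly $c$ this forces exactly one starting time per chain. The heart of the construction is encoding ``at most $m$ jobs run at any time step'' compactly. For a chain $C$ and time $t'$, let $A_{C,t'}\subseteq T_C$ be the set of starting times for which $C$ has a job running from $t'$ to $t'+1$ (computable in polynomial time, $|A_{C,t'}|$ at most the number of jobs of $C$). Because exactly one $x_{C,\cdot}$ is true, ``chain $C$ is \emph{not} busy at $t'$'' is the \emph{positive} disjunction $\bigvee_{t\in T_C\setminus A_{C,t'}}x_{C,t}$, so ``at most $m$ chains busy at $t'$'' becomes: for every $(m+1)$-subset $\{C_0,\dots,C_m\}$ of chains, the single \emph{load clause} $\bigvee_{i=0}^{m}\bigvee_{t\in T_{C_i}\setminus A_{C_i,t'}}x_{C_i,t}$. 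There are only $\binom{c}{m+1}\leq 2^c$ such subsets, so over all (polynomially many) times $t'$ we get $2^c\cdot n^{O(1)}$ monotone clauses, each of polynomial length; the whole formula is thus of FPT size and producible in FPT time, and its parameter $k=c$ equals the source parameter.

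Then I would verify correctness in both directions. Given a feasible schedule, set $x_{C,t}$ true exactly for the chosen starting times: the chain clauses hold, and at any time $t'$ at most $m$ chains are busy, so in any $(m+1)$-subset some $C_i$ is idle, its chosen time lies in $T_{C_i}\setminus A_{C_i,t'}$, and the corresponding literal satisfies the load clause. Conversely, a weight-$c$ model selects exactly one starting time $\sigma(C)$ per chain; if the induced schedule had $m+1$ chains busy at some $t'$, then in that load clause the unique true variable of each listed chain is $x_{C_i,\sigma(C_i)}$ with $\sigma(C_i)\in A_{C_i,t'}$, hence excluded from the clause, so no literal of the clause is true --- contradicting that the assignment is a model. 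This yields the equivalence, and therefore membership in $W[2]$.

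I expect the main obstacle to be exactly this compact encoding of the global ``at most $m$ machines'' constraint. The encodings used when $m$ is a constant (a forbidding clause of size $m+1$ for each $(m+1)$-tuple of conflicting job placements) do not transfer: with $m$ variable the number of such clauses is roughly $|S_{t'}|^{\,m+1}$, which is XP- rather than FPT-sized, and the analogous monotone conversion blows up by a factor $\prod_i|A_{C_i,t'}|$. The crucial realization is to count \emph{chains} busy at $t'$ (there are only $c$, i.e.\ parameter many, of them, using $m+1\leq c$) rather than \emph{jobs}, and to phrase ``not busy'' as a positive disjunction thanks to the one-true-variable-per-chain structure. Everything else --- the polynomial bound on starting times from unary delays, the size bookkeeping, and the two correctness directions --- is routine.
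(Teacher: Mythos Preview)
Your proposal is correct and yields a valid FPT reduction to \textsc{Weighted $2$-Normalized Satisfiability}, hence membership in $W[2]$. The key insight---counting \emph{chains} rather than \emph{jobs} busy at a time step, so that the pigeonhole encoding over $(m{+}1)$-subsets has only $\binom{c}{m+1}\le 2^c$ cases---is exactly what is needed, and your correctness argument in both directions is sound. One small remark: the claim that each $T_C$ is of polynomial size does not follow from unary \emph{delays} alone (deadlines and release dates could still be large); you need a preprocessing step that bounds $d_C-r_C$ polynomially, which is straightforward (an equivalent instance with $d_C-r_C\le n(D{+}1)$ can be produced in polynomial time).

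The paper takes a different route: instead of reducing to \textsc{Weighted CNF-SAT}, it reduces to \textsc{Threshold Dominating Set} (parameterized by the solution size and the threshold, known to be $W[2]$-complete). The target graph has a vertex $x_{C,t}$ for each chain/starting-time pair, a vertex $y_{t'}$ for each time step, gadget vertices $z_{C,\alpha}$ forcing exactly one $x_{C,\cdot}$ per chain into the solution, and (when $c-m>1$) a small clique of $c-m-1$ ``filler'' vertices plus one extra vertex of degree $c-m-1$ that forces all fillers into the solution. An edge $x_{C,t}y_{t'}$ is present iff starting $C$ at $t$ leaves $C$ idle at $t'$; asking that each $y_{t'}$ be dominated at least $c-m$ times then encodes ``at least $c-m$ chains are idle at $t'$,'' i.e., at most $m$ are busy. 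Conceptually this is the same constraint you encode, but phrased via a \emph{threshold} on a single vertex rather than via a family of clauses over all $(m{+}1)$-subsets. The practical difference is that the paper's reduction is polynomial-time with a polynomial-size output, whereas yours produces $2^c\cdot n^{O(1)}$ clauses and is only FPT-time; both are perfectly adequate for showing $W[2]$ membership, and your argument is arguably more transparent since it lands directly in the canonical $W[2]$-complete problem.
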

\begin{proof}
We use a reduction to {\sc Threshold Dominating Set}. In the {\sc Threshold Dominating Set} problem, we are given a graph $G$ and integers $k$ and $r$, and ask for a set of at most $k$ vertices, such that each vertex in $G$ is dominated at least $r$ times.

Downey and Fellows~\cite{DowneyF98} showed that {\sc Threshold Dominating Set}, parameterized by $k$ and $r$ is $W[2]$-complete. 

Suppose we have an instance of the \textsc{Chain Scheduling with Exact Delays} scheduling problem with $m$ machines and $c$ chains. Let $t_{\max}$ the largest deadline in this instance. 
We will distinguish three cases: $c-m \leq 0$, $c-m = 1$, and $c - m > 1$. If $c\leq m$, any schedule is feasible, so we reduce to a trivial dominating set instance. 

Suppose that $c - m =1$. 
We will construct an instance of \textsc{Threshold Dominating Set} where we look for a set of $c$ vertices that dominates each vertex at least once. 

We construct a graph $G$ with three types of vertices:
\begin{itemize}
    \item For each chain $C$ and each time step $t$, we have vertex $x_{C,t}$ if and only if it is possible to start chain $C$ at time $t$ (i.e., $t$ is not before the release date of $C$ and $t$ plus the execution time of $C$ is at most the deadline of $C$).
    \item We have an independent set with for each time $t$, with $0 \leq t< t_{\max}$, a vertex $y_t$.
    \item For each chain $C$ we have $c+1$ vertices $z_{z, \alpha}$, with $\alpha = 1, 2, \ldots, c+1$. 
\end{itemize}

We have the following edges:
\begin{itemize}
    \item For each chain $C$, the set of vertices of the form $x_{C,t}$ forms a clique.
    \item There is an edge between $x_{C,t}$ to $y_{t'}$, if and only if starting chain $C$ at time $t$ makes that no job of $C$ starts at time $t'$. 
    \item Each vertex $z_{C,\alpha}$ is adjacent to all vertices of the form $x_{C,t}$. 
\end{itemize}
See Figure \ref{fig:chains-in-W2-1}.

\begin{figure}
\centering
\includegraphics[height=.4\textheight]{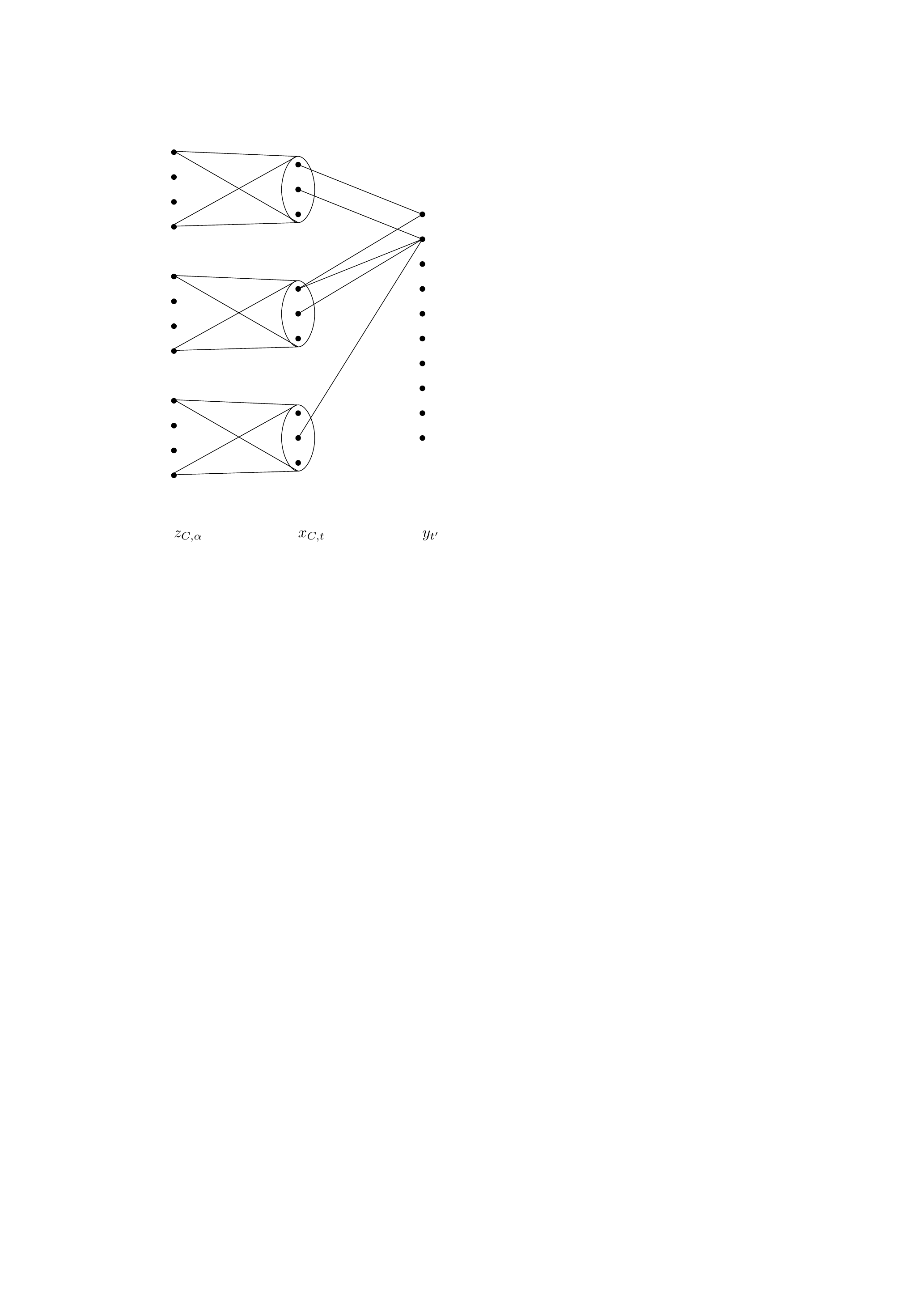}
\caption{The construction described in Lemma \ref{lem:chains-in-W2} for the case $c-m = 1$.} \label{fig:chains-in-W2-1}
\end{figure}

\begin{claim} \label{claim:chain-in-W2-1-a}
If $G$ contains a threshold dominating set of size at most $c$ that dominates every vertex at least once, then there is a feasible schedule. 
\end{claim}
\begin{claimproof}
Let $U$ be a threshold dominating set with $|W|\leq c$, such that each vertex is dominated at least once.

We will show that $U$ consists of vertices of the form $x_{C,t}$, one for each chain $C$. 

Consider a chain $C$. Notice that there are $c+1$ vertices $z_{C,\alpha}$, so $U$ does not contain all of them. Let $\alpha$ be such that $z_{C, \alpha} \notin U$. Since $z_{C, \alpha}$ is dominated by $U$, at least one vertex of the form $x_{C, t}$ is contained in $U$. Since we have $c$ chains and $U$ has size $c$, it follows that for every chain $C$, $U$ contains exactly one vertex of the form $x_{C,t}$ and no vertex of the form $z_{C, \alpha}$. Moreover, $U$ does not contain vertices of the form $y_t$.  

Make a schedule as follows: for every chain $C$, start this chain at time $t$, for $t$ such that $x_{C, t} \in U$. 
Consider a time $t'$. Since vertex $y_{t'}$ is dominated by a vertex $x_{C, t}$, we know that there is an edge from $x_{C, t}$ to $y_{t'}$. By the construction of $G$, it follows that chain $C$ has no job starting at time $t'$. It follows that at most $c-1 = m$ jobs start at time $t$. We conclude that this is a feasible schedule.  
\end{claimproof}

\begin{claim}
If there is a feasible schedule, then $G$ has a threshold dominating set of size at most $c$ that dominates every vertex at least once. 
\end{claim}
\begin{claimproof}
Consider the set $U$, that contains for every chain $C$ the vertex $x_{C,t}$, where $t$ is the starting time of $C$ in the schedule. 

It is clear that all vertices of the form $x_{C,t'}$ and $z_{C,\alpha}$ are dominated. 
Now consider a vertex $y_{t'}$. Since we have a feasible schedule, there is at least one chain that does not have a job starting at time $t'$. It follows that $y_{t'}$ is dominated by the corresponding vertex $x_{C,t}$. 
\end{claimproof}

Consider the last case $c-m > 1$. 

We will construct an instance of \textsc{Threshold Dominating Set} where we look for a set of $2c-m$ vertices that dominates each vertex at least $c-m$ times. 

We construct a graph $G$ with five types of vertices, this graph will be similar to the graph in the previous case. 
\begin{itemize}
    \item For each chain $C$ and each time step $t$, we have vertex $x_{C,t}$ if and only if it is possible to start chain $C$ at time $t$ (i.e., $t$ is not before the release date of $C$ and $t$ plus the execution time of $C$ is at most the deadline of $C$).
    \item We have an independent set with for each time $t$, with $0 \leq t< t_{\max}$, a vertex $y_t$.
    \item For each chain $C$ we have $c+1$ vertices $z_{z, \alpha}$, with $\alpha = 1, 2, \ldots, c+1$. 
    \item We have a clique of $c-m - 1$ vertices $w_i$, $1 \leq i \leq c-m - 1$.
    \item We have a vertex $v$. 
\end{itemize}

We have the following edges:
\begin{itemize}
    \item For each chain $C$, the set of vertices of the form $x_{C,t}$ forms a clique.
    \item There is an edge between $x_{C,t}$ to $y_{t'}$, if and only if starting chain $C$ at time $t$ makes that no job of $C$ starts at time $t'$. 
    \item Each vertex $z_{C,\alpha}$ is adjacent to all vertices of the form $x_{C,t}$. 
    \item As written above, the vertices $w_i$ form a clique.
    \item There is an edge $vw_i$ for every $i$. 
    \item Each vertex $w_i$ is adjacent to all vertices of the form $x_{C,t}$. 
    \item Each vertex $w_i$ is adjacent to all vertices of the form $z_{C,\alpha}$. 
\end{itemize}

See Figure \ref{fig:chains-in-W2-more}.
\begin{figure}
\centering
\includegraphics[height=.4\textheight]{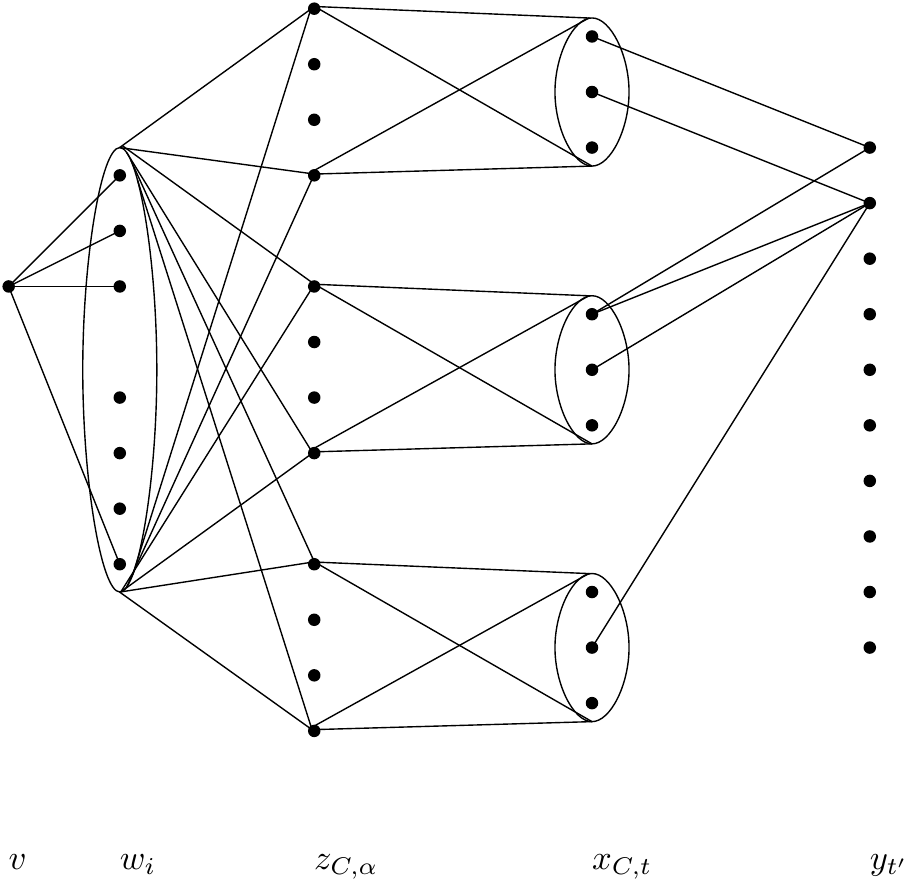}
\caption{The construction described in Lemma \ref{lem:chains-in-W2} for the case $c-m > 1$. The edges from $w_i$ to $x_{C,t}$ are omitted. } \label{fig:chains-in-W2-more}
\end{figure}

\begin{claim}
If $G$ has a threshold dominating set of size at most $2c-m$ that dominates every vertex at least $c-m$ times, then there is a feasible schedule. 
\end{claim}
\begin{claimproof}
Suppose that $G$ a threshold dominating set $U$ with $|U|\leq 2c-m$, such that each vertex is dominated at least $c-m$ times.

First, notice that, as $v$ has degree $c-m-1$, $U$ needs to contain $v$ and all its neighbours, i.e.,
all vertices of the form $w_i$. 
It follows that $U$ contains at most $c$ other vertices. 

As in the proof of Claim~\ref{claim:chain-in-W2-1-a}, it follows that $U$ contains exactly one vertex of the form $x_{C,t}$ for every chain $C$. 

Again as in the proof of Claim~\ref{claim:chain-in-W2-1-a}, we make a schedule by starting every chain at the time $t$ for which $x_{C,t} \in U$. 
Now consider a time $t'$. We know that $y_{t'}$ is dominated at least $c-m$ times. It follows that there are $c-m$ chains that have no job starting at time $t'$. We conclude that at most $m$ chains have a job starting at time $t'$, hence we use at most $m$ machines. 
\end{claimproof}

\begin{claim}
If there is a feasible schedule, then $G$ has a threshold dominating set of size at most $2c-m$ that dominates every vertex at least $c-m$ times. 
\end{claim}
\begin{claimproof}
Consider the set $U$ that contains $v$, all vertices $w_i$ and for every chain $C$ the vertex $x_{C,t}$, where $t$ is the starting time of $C$ is the schedule. 

It is clear that $v$ and every vertex $w_i$ is dominated at least $c-m$ times. Notice that every vertex $z_{C,\alpha}$ is dominated by every vertex $w_i$, and by one vertex $x_{C,t}$, so $z_{C, \alpha}$ is dominated at least $c-m$ times. The same holds for all vertices $x_{C,t}$. 

Now consider a vertex $y_{t'}$. We know that at most $m$ chains start a job at time $t'$. Hence, at least $c-m$ chains do not start a job at time $t'$. We conclude that $y_{t'}$ is dominated at least $c-m$ times. 
\end{claimproof}

We conclude that the \textsc{Chain Scheduling with Exact Delays} problem parametrized by the number of chains is in $W[2]$. 
\end{proof}

\subsection{Minimum delays}
\label{subsection:minimumtimeschains}

With a simple modification, the hardness proofs for exact delays in unary can be modified to hardness proofs for 
minimum delays (still in unary). The modification consists of taking a number of copies of the instance, as described below.

Suppose we have an instance for \textsc{Chain Scheduling with Minimum Delays} with exact delays with $c$ chains and $m$ machines. The $C$th chain has jobs
$j^C_1, j^C_2, \ldots, j^C_{s_C}$, with precedence constraints
$(j^C_i, j^C_{i+1})$, release date $r_C$ and deadline $d_C$. 

We define the {\em minimum duration} of a chain $C$ to be
$\ell_C = s_C + \sum_{i=1}^{s_C-1} l_{j^C_i, j^C_{i+1}}$. Thus, when the first job of
$C$ is executed at a time $t$, then the last job is executed at time $t+\ell_C -1$ or later, and thus the minimum duration $\ell_C$ denotes the minimum number of time steps from the first till the last execution of a job in the chain.

Set $T= \max_{1\leq C\leq c} d_C$ to be the maximum deadline of all chains.

We now build an instance for \textsc{Chain Scheduling with Minimum Delays}. 
The intuition behind the construction is the following: we build $cT+1$ identical copies of the original instance after each other. Each is executed in its own slot of $T$ consecutive time steps, in the same way as a solution of the original instance. If we have a solution of the new instance, then one of the copies has no preemption, \textit{i.e.}, must have delays equal to the minimum, and this gives a solution of the original instance.

For each chain $C$ from the original
instance, we make a new chain $C'$ in the new instance. This chain $C'$ has 
jobs $j^C_{i,a}$ with $1\leq i\leq s_C$ and $1\leq a \leq cT+1$.
We have precedence constraints $(j^C_{i,a}, j^C_{i+1,a})$ ($1\leq i <s_C$) whose minimum delay equals the
exact delay for the original jobs $(j^C_i, j^C_{i+1})$. The last job of the $a$th copy precedes the first job of the $(a+1)$st copy: we have a precedence constraint $(j^C_{s_C,a}, j^C_{1,a+1})$, for
$1\leq a < cT+1$. We set the minimum delay for this precedence in such a way that the minimum execution time between a job and its next copy equals $T$: the precedence constraint $(j^C_{s_C,a}, j^C_{1,a+1})$ goes with a minimum delay of $T-\ell_C$. The release date of the the new chain $C'$ equals $r_C$ and the deadline equals $cT^2+d_C$.

\begin{claim}
The instance of the \textsc{Chain Scheduling with Exact Delays} problem has a solution, if and only if the instance of the \textsc{Chain Scheduling with Minimum Delays} problem has a solution.
\end{claim}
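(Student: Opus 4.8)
The plan is to prove the two directions of the equivalence separately, with the forward direction being essentially trivial and the backward direction carrying the real content via a pigeonhole argument.

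\medskip

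\textbf{Forward direction ($\Leftarrow$).} Suppose the \textsc{Chain Scheduling with Exact Delays} instance has a solution in which chain $C$ starts at time $t_C$. I would construct a schedule for the new instance by placing, for each chain $C'$ and each copy index $a$, the jobs of the $a$th copy exactly as in the original solution but shifted by $(a-1)T$ time steps; that is, job $j^C_{i,a}$ starts at time $t_C + (a-1)T + (\text{offset of } j^C_i \text{ within } C)$. I need to check: (i) the within-copy precedence delays are met exactly (hence at least the minimum), which is immediate from the original solution; (ii) the between-copy precedence $(j^C_{s_C,a}, j^C_{1,a+1})$ has actual delay exactly $T - \ell_C$ by this choice of shift, so its minimum-delay constraint is met with equality; (iii) release date $r_C$ and deadline $cT^2 + d_C$ are respected, using $0 \le t_C$, $t_C + \ell_C \le t_C + (\text{original makespan of }C) \le d_C$, and $(cT+1-1)T \le cT\cdot T = cT^2$; (iv) the machine bound $m$ holds at every time step because each time window $[(a-1)T, aT)$ contains a perfect translated copy of the original feasible schedule, and the windows for distinct $a$ are disjoint (here one uses $d_C \le T$ so the original schedule fits in a window of length $T$). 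This gives a feasible solution of the minimum-delays instance.

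\medskip

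\textbf{Backward direction ($\Rightarrow$).} Suppose the minimum-delays instance has a feasible schedule. The key observation is that the number of copies $cT+1$ exceeds the total number of chains times $T$, so a counting argument forces one copy to be "tight." Concretely, for each chain $C'$ and each consecutive pair within it, define the \emph{slack} as the actual delay minus the minimum delay; the total accumulated slack across all $cT+1$ copies of chain $C'$ is bounded because the first job of $C'$ starts at $\ge r_C \ge 0$ and the last job must finish by $cT^2 + d_C$, while the sum of all minimum durations plus between-copy minimum delays over all $cT+1$ copies is $(cT+1)T - (T - \ell_C) \cdots$ — I would total this up to show the slack budget per chain is less than $T$, in fact at most $cT^2 + d_C - (\text{sum of minimum requirements}) < T$ roughly. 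Then, summing over the $c$ chains, the total slack is less than $cT$, so among the $cT+1$ copy-indices $a$ there is at least one index $a^\*$ such that \emph{no} chain incurs any slack in copy $a^\*$ — i.e., for every chain $C$, all within-copy delays in copy $a^\*$ equal their minimum, which equals the exact delay of the original instance. Reading off the starting times of the copy-$a^\*$ jobs (shifted back by the appropriate multiple of $T$) yields a schedule meeting all exact delays; the machine bound carries over directly since it held in the original schedule restricted to those jobs, and the release/deadline constraints of the original chains are inherited because copy $a^\*$ lives in a time window that, after the shift, lands in $[r_C, d_C]$-compatible position (this needs the deadline $cT^2 + d_C$ to have been chosen precisely so that copy $a$ can legally occupy window $a$).

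\medskip

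\textbf{Main obstacle.} The delicate part is the bookkeeping in the backward direction: getting the slack-budget inequality to come out strictly less than $T$ per chain (so that the union over $c$ chains stays below $cT$ and pigeonhole on $cT+1$ copies applies), and making sure that "zero slack in copy $a^\*$" genuinely means every delay is exactly the minimum rather than merely that the copy finishes on time — one must argue that a single unit of slack anywhere in copy $a^\*$ would, because all minimum delays downstream are rigid, push some job past where the next copy must start, contradicting feasibility or the global deadline. I would organize this as a clean lemma: "in any feasible schedule, for each chain the sum over all copies of the within-copy slacks is at most $d_C < T$," proven by comparing the start time of the first job of copy $1$ with the forced finish time, and then apply pigeonhole. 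The release-date and machine constraints are routine once the copy structure is in place.
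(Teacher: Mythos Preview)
Your proposal is correct and follows essentially the same approach as the paper: the forward direction schedules copy $a$ at a shift of $(a-1)T$, and the backward direction bounds the total slack per chain by $d_C - r_C - \ell_C \le T$, then uses a pigeonhole argument (each chain can ``taint'' at most $T$ of the $cT+1$ copies, so some copy is untainted by all chains) to extract an exact-delay schedule. Your bookkeeping sketch is slightly looser than the paper's (the slack bound is $d_C - r_C - \ell_C$, not $d_C$, and the pigeonhole relies on integrality of slack so that each tainted copy costs at least one unit), but these are exactly the details you flag as the main obstacle, and the paper fills them in just as you anticipate.
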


\begin{proof}
Suppose we have a solution of the \textsc{Chain Scheduling with Exact Delays} problem with exact delays, where jobs $j^C_i$ are executed at time $t(j^C_i)$.
Now, schedule for the instance with minimum delays, jobs $j^C_{i,a}$ on time $t'(j^C_{i,a}) = (a-1)\cdot T + t(j^C_i)$. 

One can easily verify that this schedule fulfills the constraints. Note that all delays equal the minimum delays.

Now, suppose we have a solution of the \textsc{Chain Scheduling with Minimum Delays} problem with minimum delays. Note that the total execution time of 
a chain is at least $cT^2 + l_C$: the total execution time 
from a job $j^C_{1,a}$ till $j^C_{1,a+1}$ is constructed to be at least  
$T$, for $1\leq a < cT+1$. The total number of time steps from the 
release time till the 
deadline is 
$cT^2 + d_C - r_C$. 
Thus, the total slack in a chain, \textit{i.e.}, the sum of the differences between the scheduled delay time and the stated minimum delay, cannot be larger
than $cT^2 + d_C - r_C - cT^2 - l_C = d_C - r_C - l_C$. Notice that this is at most $T$. 

Say copy $a$ is {\em tainted} when there is at least one job $j^C_{i,a}$ for some $C$ and $i< s_C$ where
the delay from $j^C_{i,a}$ to $j^C_{i+1,a}$ is larger than the stated minimum delay. As each chain
has at most $T$ jobs of this type, we can have at most $cT$ tainted copies.

Thus, there is a copy $a$ that is not tainted. From the times that the jobs of copy $a$ are scheduled, we build a solution.

Notice that the first job $j^c_{1,1}$ is scheduled at the earliest at time $r_C$ and that $j^C_{1,a}$ is scheduled at least $(a-1)T$ time steps after $j^C_{1,1}$. So $j^C_{1,a}$ is scheduled at the earliest at time $(a-1)T + r_C$. Analogously, we find that $j^C_{s_C,a}$ is scheduled before $(a-1)T + d_C$.  Hence, every job $j^C_{i,a}$ is scheduled between $(a-1)T + r_C$ and $(a-1)T + d_C$. Suppose that job $j^C_{i,a}$ is scheduled to start at time $(a-1)T + t(j^C_{i,a})$, then schedule the corresponding job $j^C_i$ to start at time $t(j^C_{i,a})$. Notice that this is a feasible schedule, and that the exact delays are satisfied since the copy $a$ is not tainted. 
\end{proof}

By Theorems \ref{thm:chains-W1}, \ref{thm:chains-constant-machines-W1} and \ref{thm:chains-m-machines-W2}, we conclude the following results. 
\begin{theorem}
The \textsc{Chain Scheduling with Minimum Delays} problem with one machine, parameterized by the number of chains $c$, is $W[1]$-hard. 
\end{theorem}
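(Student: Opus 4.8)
The statement to establish is that \textsc{Chain Scheduling with Minimum Delays} on one machine, parameterized by the number of chains $c$, is $W[1]$-hard. The plan is to combine Theorem~\ref{thm:chains-W1} (which gives $W[1]$-hardness of \textsc{Chain Scheduling with Exact Delays} on one machine, parameterized by the number of chains) with the generic reduction from exact delays to minimum delays that was described in this subsection just before the statement. Concretely, I would start from an arbitrary instance $\mathcal{I}$ of \textsc{Chain Scheduling with Exact Delays} with one machine, $c$ chains, and delays in unary, and apply the ``$cT+1$ copies'' construction verbatim: replace each chain $C$ by a chain $C'$ consisting of $cT+1$ concatenated copies of $C$, with the intra-copy minimum delays equal to the original exact delays, the inter-copy minimum delay set to $T-\ell_C$ (so that consecutive copies are forced to be at least $T$ apart), release date $r_C$, and deadline $cT^2 + d_C$. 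Here $T$ is the maximum deadline in $\mathcal{I}$ and $\ell_C$ is the minimum duration of $C$.

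The key verification is the equivalence claim already stated and proved in the excerpt: $\mathcal{I}$ has a feasible exact-delay schedule if and only if the constructed instance has a feasible minimum-delay schedule. The forward direction simply places copy $a$ in its own time window $[(a-1)T, aT)$, shifted by $(a-1)T$, and all delays come out exactly minimal. The backward direction uses the slack counting: the total slack of any chain $C'$ is at most $d_C - r_C - \ell_C \le T$, so at most $cT$ of the $cT+1$ copies can be ``tainted'' (have a non-minimal intra-copy delay), hence some copy $a$ is untainted; reading off the start times of copy $a$ (modulo the shift $(a-1)T$) yields an exact-delay schedule for $\mathcal{I}$. Since the machine count is unchanged (one machine) and the number of chains is unchanged ($c$ chains), the parameter is preserved.

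The final step is to observe that this transformation is a parameterized reduction: it runs in polynomial time, provided the original delays are given in unary — this is exactly why the hypothesis ``delays in unary'' matters, since $T$ can be as large as the sum of the unary-encoded delays, and we build $\Theta(cT)$ copies, so each new chain has $\Theta(cT \cdot |C|)$ jobs and the new deadlines $cT^2 + d_C$ are polynomial in the unary input size. The parameter ($c$, the number of chains) is literally unchanged, so it is trivially bounded by a computable function of the old parameter. Combining this reduction with Theorem~\ref{thm:chains-W1} gives the $W[1]$-hardness.

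I do not expect a real obstacle here, since the heavy lifting — both the construction and the correctness proof — is already contained in the ``simple modification'' paragraph and the claim preceding the theorem statement. The only point requiring a moment of care is confirming that the reduction stays polynomial, i.e.\ that the blow-up factor $cT+1$ and the quadratic-in-$T$ deadlines are polynomial in the \emph{unary} encoding length of $\mathcal{I}$; once that is noted, the theorem follows immediately from Theorem~\ref{thm:chains-W1} and the already-established equivalence claim.
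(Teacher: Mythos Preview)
Your proposal is correct and follows essentially the same approach as the paper: the theorem is derived directly from Theorem~\ref{thm:chains-W1} via the generic ``$cT+1$ copies'' reduction and the equivalence claim established just before the statement. Your additional remark verifying that the reduction runs in polynomial time for unary input (and preserves the parameter $c$ exactly) makes explicit a point the paper leaves implicit.
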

\begin{theorem}
The \textsc{Chain Scheduling with Minimum Delays} problem with $m$ machines, parametrized by the number of chains, is $W[1]$-hard. 
\end{theorem}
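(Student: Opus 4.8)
The plan is to recognize that the construction presented just above --- wrapping $cT+1$ copies of a \textsc{Chain Scheduling with Exact Delays} instance into a single \textsc{Chain Scheduling with Minimum Delays} instance --- already \emph{is} the required parameterized reduction, so the theorem follows by merely checking that it is polynomial and parameter-preserving and then invoking the known hardness of the exact-delays version. Here the delays are understood to be given in unary, as throughout this section.

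Concretely, I would proceed as follows. First, start from an arbitrary instance of \textsc{Chain Scheduling with Exact Delays} with $c$ chains and $m$ machines and apply the transformation above. Second, observe that the output instance has exactly $c$ chains (one chain $C'$ for each original chain $C$) and exactly $m$ machines; in particular the parameter ``number of chains'' is left unchanged, which is what a parameterized reduction requires. Third, verify that the transformation runs in polynomial time: since the delays are in unary, $T=\max_{1\le C\le c}d_C$ is bounded polynomially in the input size, so each new chain $C'$ consists of $s_C\cdot(cT+1)$ jobs, its precedence delays are bounded by $\max\{T,\max_i l_{j^C_i,j^C_{i+1}}\}$, and its release date and deadline ($r_C$ and $cT^2+d_C$) are polynomially bounded as well; hence the whole new instance has size polynomial in the original and can be written down in polynomial time, still with delays in unary. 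Fourth, invoke the Claim proved above, which states that the exact-delays instance is a yes-instance if and only if the minimum-delays instance is. Combining these four observations, the transformation is a polynomial parameterized reduction from \textsc{Chain Scheduling with Exact Delays} with $m$ machines, parameterized by the number of chains, to \textsc{Chain Scheduling with Minimum Delays} with $m$ machines, parameterized by the number of chains.

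Finally, apply Theorem~\ref{thm:chains-constant-machines-W1}, which says that \textsc{Chain Scheduling with Exact Delays} with $m$ machines parameterized by the number of chains is $W[1]$-complete, hence in particular $W[1]$-hard. Since $W[1]$-hardness is preserved under parameterized reductions, we conclude that \textsc{Chain Scheduling with Minimum Delays} with $m$ machines parameterized by the number of chains is $W[1]$-hard. The companion statement for a single machine is obtained in exactly the same way, using Theorem~\ref{thm:chains-W1} instead of Theorem~\ref{thm:chains-constant-machines-W1}.

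I do not expect a real obstacle once the Claim is in hand: the only point that genuinely needs care is the third step, the polynomial-time bound, which hinges on the delays being given in unary --- if they were in binary, $T$ could be exponential and the $cT+1$ copies would blow the instance up exponentially, so the argument would break. A secondary point worth double-checking is that the reduction keeps the parameter \emph{exactly} equal, not merely bounded by a computable function of it, which it does by construction.
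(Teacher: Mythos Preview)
Your proposal is correct and follows exactly the paper's approach: the paper states the transformation with $cT+1$ copies, proves the equivalence Claim, and then simply writes ``By Theorems \ref{thm:chains-W1}, \ref{thm:chains-constant-machines-W1} and \ref{thm:chains-m-machines-W2}, we conclude the following results,'' of which the present theorem is one. If anything, your write-up is more careful than the paper's, since you explicitly verify that the parameter is preserved and that the blow-up is polynomial under the unary-delay assumption.
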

\begin{theorem}
The \textsc{Chain Scheduling with Minimum Delays} problems with a variable number of machines, parameterized by the number of chains $c$, is $W[2]$-hard. 
\end{theorem}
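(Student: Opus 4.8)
The plan is to obtain this statement (together with the two $W[1]$-hardness statements) as an immediate consequence of the copying construction described in this subsection, instantiated with an appropriate starting instance; for the variable-machines case we start from Theorem~\ref{thm:chains-m-machines-W2}. Concretely, given an instance of \textsc{Chain Scheduling with Exact Delays} with a variable number of machines $m$ and $c$ chains, we apply the construction above: form $cT+1$ sequential copies of the instance, where $T=\max_C d_C$; glue the last job of copy $a$ to the first job of copy $a+1$ of the same chain with a minimum delay of $T-\ell_C$; keep each release date $r_C$; and set the new deadline of each chain to $cT^2+d_C$. The result is an instance of \textsc{Chain Scheduling with Minimum Delays}.

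First I would check that this is a valid parameterized reduction. The number of machines is unchanged, and, crucially, the number of chains is still $c$, so the parameter is preserved exactly. For the running time, since the delays (and hence the deadlines) are given in unary, $T$ is polynomially bounded in the input size, so the $cT+1$ copies, of total chain length $O(cT^2)$, can be written down in polynomial time. The minimum delay $T-\ell_C$ is non-negative because any chain that can be scheduled within its window satisfies $\ell_C\le d_C\le T$; if instead $\ell_C$ exceeds the available window, the original instance is a trivial no-instance and we may map it to a trivial no-instance. Hence we have a polynomial-time parameterized reduction.

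Second I would invoke the Claim already established in this subsection: the exact-delays instance has a solution if and only if the constructed minimum-delays instance has one. The forward direction places copy $a$ in its own block of $T$ consecutive time steps, reproducing the exact-delay schedule with every delay at its minimum. The backward direction uses the slack bound: the total slack available within each chain is at most $d_C-r_C-\ell_C\le T$, so at most $cT$ of the $cT+1$ copies can be tainted (i.e., contain a strictly larger-than-minimum delay inside a chain), leaving at least one untainted copy whose schedule, shifted back by $(a-1)T$, is a feasible exact-delay schedule of the original instance.

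Combining these, a $W[2]$-hard problem (Theorem~\ref{thm:chains-m-machines-W2}) reduces, in polynomial time and preserving the parameter $c$, to \textsc{Chain Scheduling with Minimum Delays} with a variable number of machines parameterized by the number of chains, which yields the claimed $W[2]$-hardness; the $W[1]$-hardness statements for one machine and for a constant number of machines follow in exactly the same way from Theorems~\ref{thm:chains-W1} and~\ref{thm:chains-constant-machines-W1}. I expect the only point requiring care to be the verification that the reduction is a genuine parameterized reduction --- namely, that the $cT^2$ blow-up is harmless precisely because the delays are encoded in unary, and that taking $cT+1$ copies leaves the parameter $c$ untouched --- since the scheduling equivalence itself is delivered by the Claim.
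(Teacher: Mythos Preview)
Your proposal is correct and follows essentially the same approach as the paper: the paper likewise derives this theorem directly from the copying construction and its equivalence Claim, instantiated with Theorem~\ref{thm:chains-m-machines-W2}, and states the conclusion in a single sentence. Your added checks that the parameter is preserved and that the unary encoding keeps the blow-up polynomial make explicit what the paper leaves implicit, but the argument is the same.
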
 

Since the tickness is upper bounded by the number of chains, the same results follow when parametrized by the tickness. 
\section{XP algorithms}
\label{sec:XP}

In this section, we give two positive results. First, we show membership in XP for all studied variants, when delays are given in unary, with a relatively straightforward dynamic programming algorithm. Then, in
the case of one machine, we show that {\sc Scheduling with Exact Delays} is in XP, when parameterized
by the number of chains, even when delays are given in binary.

\begin{lemma}
Given an instance of {\sc Chain Scheduling with Exact Delays} or {\sc Chain Scheduling with Minimum Delays},
one can build in polynomial time an equivalent instance, where all release dates and deadlines of chains
are nonnegative integers, bounded by $cn(D+1)$, where $c$ is the number of chains, $n$ the number of jobs,
and $D$ the maximum delay between two successive jobs in a chain. In addition, for each chain $C$, we 
have $d_C-r_C \leq n(D+1)$.
\label{lemma:unarydeadlines}
\end{lemma}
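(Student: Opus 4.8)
The plan is to normalize the chains one at a time, first by compressing each chain's own release–deadline window to length at most $n(D+1)$, and then by "packing" the $c$ chains into a common timeline of length at most $cn(D+1)$.

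First I would observe that a chain $C$ with jobs $j^C_1,\dots,j^C_{s_C}$ and delays $l^C_i := l_{j^C_i,j^C_{i+1}}$ has a \emph{minimum duration} $\ell_C = s_C + \sum_{i=1}^{s_C-1} l^C_i$ (as already defined in Section~\ref{subsection:minimumtimeschains}): no feasible schedule of $C$ can fit in fewer than $\ell_C$ consecutive time steps, so if $d_C - r_C < \ell_C$ the whole instance is trivially infeasible and we may output a fixed no-instance. Otherwise $\ell_C \le d_C - r_C$. Now, in \emph{any} feasible schedule, all of chain $C$'s jobs lie in $[r_C, d_C)$; moreover the choices for chain $C$ are completely determined by the start time of its first job for exact delays, and by the start times of the individual jobs (with the monotonicity/spacing constraints) for minimum delays — but in either case the relevant behaviour of $C$ only depends on \emph{where} its jobs sit relative to each other and to the other chains at the time steps they actually occupy. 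The key point: since $\ell_C \le s_C + (s_C-1)D \le n(D+1)$, the chain's "active" span (from its first to its last job) is at most $n(D+1)$ regardless of where it is placed. So the slack $d_C - r_C - \ell_C$ is the only thing that can be large, and that slack can be moved out of the interval by shifting the deadline down: if $d_C - r_C > n(D+1) \ge \ell_C$, there is nothing to gain from the extra room, and we can safely shrink $d_C$ to $r_C + n(D+1)$. I would argue this does not change feasibility: any schedule that used a start time for $C$'s first job beyond $r_C + n(D+1) - \ell_C$ can be left-shifted within $[r_C, d_C)$ — but wait, this shift might create conflicts with other chains, so this naive argument is too quick; the cleaner route is to do the window-shrinking and the global renumbering together.

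So the main step I would carry out is the global one. Sort the chains by release date, $r_{C_1}\le r_{C_2}\le\dots\le r_{C_c}$. I want to re-coordinate time by a piecewise-linear, strictly increasing (hence feasibility-preserving) map, or — more simply — to \emph{replace} each $r_C, d_C$ by new values as follows: process the chains in order of release date and maintain a "current ceiling" $M$, initially $0$; for the next chain $C$ (in sorted order), set its new release date to $\max(M, r_C^{\text{new-of-overlapping-predecessors}})$ — concretely, the point is that the union of the intervals $[r_C, d_C)$ over all chains, once each interval has length at most $n(D+1)$, is a union of at most $c$ intervals, and we can "close up" the gaps between maximal clusters. Formally: the crucial observation is that if between time $a$ and time $b$ \emph{no} chain is active — i.e. no chain $C$ has $r_C \le t < d_C$ for $a \le t < b$ — then in any feasible schedule no job at all is scheduled in $[a,b)$, so we may delete that block of time (decrease $b-a$ to $0$) from the timeline, decreasing by $b-a$ the release dates and deadlines of all chains with $r_C \ge b$. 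Doing this shifting-down one empty block at a time, together with the per-chain shrink $d_C \gets r_C + \min(d_C - r_C,\, n(D+1))$ done \emph{first}, yields: (i) each surviving interval has length $\le n(D+1)$; (ii) the active intervals cover a contiguous-after-compression region, and since there are at most $c$ of them, the total timeline length — hence the maximum new deadline — is at most $c \cdot n(D+1)$; (iii) because the transformations are either "delete a provably-unused block of time, shifting everything after it rigidly" or "shrink a deadline that was provably slack", feasibility is preserved in both directions. One must also check the new release dates/deadlines are nonnegative integers, which is immediate since all original data and $D$ are integers and we only subtract integer block-lengths. Finally, $d_C - r_C \le n(D+1)$ holds by construction from step (i), and it is not disturbed by the time-deletion step since that step shifts $r_C$ and $d_C$ by the same amount whenever it touches $C$ at all.

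The main obstacle is making the "delete an unused block of time" operation rigorous and confirming it is reversible: one direction (a schedule for the compressed instance lifts to one for the original by re-inserting the empty blocks, i.e. applying the inverse affine shift) is straightforward; the other (a schedule for the original instance, when restricted to the compressed timeline, is still feasible) requires the observation that the original schedule places no job in the deleted blocks — which is exactly why those blocks were safe to delete — so restriction-then-reindex is well defined and respects all precedence delays (exact or minimum) because all of a chain's jobs lie within its own interval and hence get shifted by one consistent amount. Care is also needed that after shrinking deadlines the set of "active" time steps is the one used to decide which blocks to delete, so the two operations must be applied in the right order: first per-chain shrink, then iteratively delete maximal unused blocks. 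I would present this as: (1) handle trivially-infeasible chains; (2) shrink each $d_C$; (3) repeatedly delete a maximal block of time during which no chain is active and rigidly shift later release dates/deadlines down; (4) bound the resulting maximum deadline by $c\cdot n(D+1)$ by counting active intervals; (5) verify the equivalence in both directions.
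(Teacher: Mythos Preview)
Your global step~(3)---deleting maximal idle blocks of the timeline and shifting later intervals down---is essentially the paper's connected-component packing and is fine. The gap is step~(2): you never justify why shrinking each $d_C$ to $r_C + n(D+1)$ preserves feasibility. You flag the problem yourself (``this shift might create conflicts with other chains, so this naive argument is too quick''), announce that the fix is to do the shrink and the renumbering \emph{together}, and then in your final outline do them separately anyway, shrink first, with no argument supplied.

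For \textsc{Minimum Delays} the gap is not just a missing sentence; your stated intuition is false. The inequality $\ell_C \le n(D+1)$ bounds the \emph{minimum} duration of chain $C$, but in a feasible minimum-delay schedule the actual span from $C$'s first to last job can be arbitrarily large, so it is entirely possible that every feasible schedule places some job of $C$ past $r_C + n(D+1)$. Your sentence ``the chain's `active' span \ldots\ is at most $n(D+1)$ regardless of where it is placed'' simply does not hold in the minimum-delay setting. The missing idea, which the paper supplies, is a \emph{left-adjusted} schedule: repeatedly pull a job earlier whenever its minimum-delay constraint and the machine bound allow. One then argues that in a left-adjusted schedule there is a job at time $r_C$ and no run of $D+1$ consecutive idle steps between $r_C$ and the last job of $C$; since there are only $n$ jobs in total, the last job of $C$ lands no later than $r_C + (n-1)(D+1)$. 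That is what licenses the deadline shrink; without it (or an equivalent argument), your step~(2) is unsupported and the proof does not go through.
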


\begin{proof}
We first look at {\sc Chain Scheduling with Minimum Delays}. Consider the following operations. If we have a pair of successive jobs $j$ and $j'$ in a chain with minimum delay $d$, if
job $j$ is scheduled to start at time $t$, job $j'$ is scheduled to start at time $t'$, and there is a time $t''$
with $t+d +1 \leq t'' <t'$, and less than $m$ jobs are starting 
at time $t''$ (with $m$ the number of
machines), then we obtain a valid schedule when we reschedule job $j'$ at $t''$ and do not change the time
for any other job. If the first job in a chain $C$ is scheduled at time $t$, and there is a time step
$t'$ with $r_C \leq t' < t$, then we obtain a valid schedule when we reschedule this first job at time
$t'$ and no not change the time for any other job.
Call a schedule {\em left-adjusted} when these steps are not possible. If there
is a valid schedule, then there is also a valid left-adjusted schedule: just take any valid schedule,
and perform the steps above while possible.

In a left-adjusted schedule, for each chain $C$, there is a job (of this or another chain) starting at time
$r_C$, and we cannot have an interval with $D +1$ successive time steps between $r_C$ and the scheduled time
of the last job in the chain without any job (of any chain) scheduled. Thus, the last job of a chain
is scheduled to start at time 
$r_C + n + (n-1)D - 1\leq r_C + (n-1)(D+1)$ or earlier. This shows that we obtain an equivalent
instance if we set for all chains the deadline to 
$\min\left\{ d_C, r_C+(n-1)(D+1) +1 \right\}$. Since $r_C+(n-1)(D+1) +1 \leq r_C+ n(D+1)$, we obtain an equivalent instance if we set all deadlines to $\min\{d_C, r_C+n(D+1)\}$.

In the case of {\sc Chain Scheduling with Exact Delays}, valid schedules are trivially left-adjusted,
so setting deadlines to 
$\min\left\{ d_C, r_C+n(D+1) \right\}$ again gives an equivalent instance.

\smallskip

Now, consider the interval graph where each chain is a vertex, with edges between vertices if the chains overlap, here, 
the intervals associated with chain vertices $C$ are the intervals $[r_C,d_C)$.
We consider the different connected components of this interval graph. Take the component that contains
the chain with minimum release date, and call this $r_{\min}$. For each chain in the component,
subtract $r_{\min}$ from its release date and deadline. We obtain an equivalent instance, with at
least one release date equal to 0. We say that this first component is {\em handled}.

We now handle the other components, one by one as follows. Set $\delta$ to be the maximum deadline of
all chains in handled components. Take, among all components that are not yet handled, the one with
minimum release date of a chain in it. Say this minimum release date is $r_{\min '}$. Subtract from
all chains in the component $r_{\min '} - \delta$ 
from the release date and deadline. Thus, the 
earliest release date in the component is now $\delta$.
It is easy to see that this gives an equivalent schedule.

The algorithm thus consists of two steps: first, we set deadlines to $\min\left\{ d_C, r_C+n(D+1) \right\}$, 
and then we handle the components. Now, order the chains by release date: $C_1, C_2, \ldots, C_c$. Notice that $r_{C_{i+1}} \leq d_{C_i}$. Since $d_{C_i} \leq r_C + n(D+1)$, it follows that the maximum deadline is at most $cn(D+1)$.  
\end{proof}

\begin{theorem}
{\sc Chain Scheduling with Exact Delays} and {\sc Chain Scheduling with Minimum Delays} belong to XP,
when delays are given in unary, and parameterized by the number of chains or thickness, for any number of machines.
\end{theorem}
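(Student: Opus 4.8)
The plan is to give a dynamic programming algorithm that sweeps through time from left to right, maintaining as state the "progress" of each chain. First I would invoke Lemma~\ref{lemma:unarydeadlines} to assume that all release dates and deadlines are nonnegative integers bounded by $cn(D+1)$, and that each chain's active interval $[r_C,d_C)$ has length at most $n(D+1)$. Since the delays are given in unary, $D$ is polynomially bounded in the input size, so $cn(D+1)$ is polynomial for fixed $c$; and since the thickness $\tau$ satisfies $\tau \le c$ (indeed the theorem is stated for "number of chains or thickness", and for the thickness parameterization I would additionally note that the number of chains sharing any common point is at most $\tau$, while chains in different connected components of the overlap-interval graph can be handled independently, so the relevant bound is on the number of simultaneously active chains), the time horizon we need to consider is polynomial in the input for fixed parameter.

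The core of the DP: process time steps $t = 0, 1, 2, \ldots, T$ in order, where $T$ is the (polynomially bounded) maximum deadline. At each time step, the state records, for every chain $C$ that is currently "active" (has been started but not yet finished, or is still allowed to start), how far along we are in that chain. For {\sc Chain Scheduling with Exact Delays}, once a chain is started at a fixed time, its entire schedule is determined, so the only choice per chain is its start time; thus a state can be described as the multiset of start times of chains that are currently running, which for each active chain is a value in a window of size at most $n(D+1)$; with at most $\tau$ (or $c$) active chains this gives at most $(n(D+1))^{\tau}$ states per time step, i.e. polynomially many for fixed parameter. For {\sc Chain Scheduling with Minimum Delays}, the state must additionally record, for each active chain, whether it is currently "waiting" (inside a delay) or where exactly it is; more precisely, for each active chain we record the position of the last completed job and the time it was completed (equivalently the number of time steps we still must wait before the next job may start), again a polynomially bounded quantity. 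The transition from the state at time $t$ to the state at time $t+1$ enumerates, for the exact-delay case, which new chains start at time $t$ (respecting release dates), checks that the total number of jobs scheduled at time $t$ — summed over all active chains, each of which either has a job at $t$ or not, determined by its start time — is at most $m$, and advances; for the minimum-delay case it additionally enumerates which waiting chains fire their next job at time $t$. We accept if we reach time $T$ with every chain completed by its deadline.

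I would then argue correctness: any feasible schedule induces a sequence of states (its restriction to each prefix of time), so the DP finds "yes" whenever a schedule exists; conversely any accepting path in the DP table reconstructs a feasible schedule, since every transition enforces the machine bound and the delay/precedence constraints, and the terminal condition enforces the deadlines. The running time is the number of states times the number of time steps times the branching per transition, all polynomial in the input size for fixed $c$ (or fixed $\tau$), which is exactly membership in XP.

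The main obstacle I expect is bookkeeping the state description cleanly, particularly in the minimum-delay case, where one must be careful that "how far along a chain is" together with "how long until its next job may be scheduled" is enough information — and that this quantity stays polynomially bounded. The key point making this work is the left-adjustment argument behind Lemma~\ref{lemma:unarydeadlines}: without bounding $d_C - r_C$ by $n(D+1)$, the waiting time could in principle blow up, but with that bound every relevant quantity lies in a polynomially sized range. A secondary subtlety is the thickness parameterization: one must observe that only chains active at the current time contribute to the state, so the state size is governed by $\tau$ rather than $c$, and that between connected components of the overlap graph the problem decomposes, so we never need to track more than $\tau$ chains at once.
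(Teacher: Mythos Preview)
Your proposal is correct and follows essentially the same approach as the paper: invoke Lemma~\ref{lemma:unarydeadlines} to bound the time horizon and each chain's active window, then run a dynamic program over time steps whose state records, for each of the at most $\tau$ currently active chains, the last job executed and its execution time (or equivalently the start time, in the exact case). The paper's state description is phrased slightly differently (a \emph{started} flag together with a pair $(j_C,t_C)$), and it does not bother decomposing into connected components since tracking only chains $C$ with $t\in[r_C,d_C)$ already limits the state to $\tau$ chains; but the substance of the argument, the state bound $(O(n^2(D+1)))^{\tau}$, and the correctness reasoning are the same as yours.
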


\begin{proof}
As the thickness is never larger than the number of chains, it suffices to give the result for the thickness.

The first step is to build the equivalent instance with deadlines and release dates in unary,
and for all chains $d_C-r_C \leq n(D+1)$, as in Lemma~\ref{lemma:unarydeadlines}.

We give now the dynamic programming algorithm; there are only small differences between the
algorithms for {\sc Chain Scheduling with Exact Delays} and {\sc Chain Scheduling with Minimum Delays}.

The algorithm uses the notion of a {\em state}. A state
consists of a time step $t$, and for each chain $C$
with $t\in [r_C,d_C)$, we have a bool started$(C)$. If $\started(C) = \textit{true}$, then the state also contains  
a job $j_C$ and a time $t_C \in [r_C,t]$.
(The intuition is as follows: $j_C$ is the last job of chain $C$ that is scheduled at or before time $t$;
$j_C$ is scheduled at time $t_C$. If no job of $C$ is scheduled at or before time $t$, then 
{\em started}$(C)$ is {\em false}.)

We say a state is {\em possible}, when there is a schedule with the following properties:
\begin{itemize}
    \item The schedule assigns a time to all jobs in a chain whose deadline is before $t$, and for all
    chains $C$ with $t\in [r_C,d_C)$ and started$(C)$ is true, a time is assigned to all jobs in the chain
    from the first job up to job $j_C$ --- i.e., we do not assign times for jobs preceded by $j_C$.
    \item Jobs $j_C$ are scheduled at time $t_C$, for all chains $C$ with $t\in [r_C,d_C)$ and started$(C)$ is true.
    \item The schedule respects the conditions on (minimum or exact) delays, and number of machines.
\end{itemize}

The dynamic programming algorithm now consists of computing for each time step $t$, in order of increasing $t$, the set of possible states for time $t$. Let $D$ be the maximum over all chains $C$ of $d_C-r_C$.
Note that the number of states at time $t$ is bounded by $(1+ n^2(D+1))^\tau$: each of the at
most $\tau$ chains $C$ with $t \in [r_C, d_C)$, we either have started$(C)$ false, or select one of the
(at most $n$) jobs in the chain, and one of the (at most $n(D+1)$) timesteps in $[r_C,d_C)$.
So, for fixed $\tau$, the number of states is polynomial in the input size.

It is not hard to see, that given a set of all possible states at time $t$, we can compute in polynomial
time the set of all possible states at time $t+1$. (The details are somewhat tedious: for all possible
states at time $t$, consider all decisions of the form where for
each machine we decide if we schedule a job on this machines, and if so, what job. For each of these, check whether delay and deadline conditions are still fulfilled; if so, add this to the possible states for 
time $t+1$. The check is different for exact or minimum delays; apart from that, these variants are handled in the same way.)

Continue computing this set for the time step that equals the maximum deadline minus $1$. If it has a possible state, then this state reflects a schedule where all jobs are scheduled, and we decide positively; otherwise, no valid schedule exists.
\end{proof}

\begin{theorem}
{\sc Chain Scheduling with Exact Delays} with $m=1$, parameterized by the number of chains, with delays in binary belongs to XP.
\end{theorem}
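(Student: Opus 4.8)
The goal is to show that {\sc Chain Scheduling with Exact Delays} on a single machine, parameterized by the number of chains $c$, is in XP even when delays are given in binary. The plan is to branch on a bounded amount of combinatorial information and then verify feasibility by solving a system of difference constraints, which can be done in polynomial time (e.g.\ via shortest paths / Bellman--Ford).

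First I would observe that a schedule on a single machine is determined, up to the delays being exact, by the \emph{starting time} $t_C$ of the first job of each chain $C$: once $t_C$ is fixed, the time at which every job $j^C_i$ runs is $t_C$ plus the (fixed) prefix sum of delays and unit lengths, so the schedule is a function of the $c$-tuple $(t_{C_1},\dots,t_{C_c})$. Feasibility then means (i) each $t_C$ lies in $[r_C, d_C - \ell_C]$ where $\ell_C$ is the total length of the chain, and (ii) no two jobs of distinct chains occupy the same time step, i.e.\ for every pair of chains $C \neq C'$ and every pair of jobs $j^C_i$, $j^{C'}_{i'}$ we need $t_C + a^C_i \neq t_{C'} + a^{C'}_{i'}$, where $a^C_i$ is the offset of job $i$ within chain $C$. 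Constraint (i) is a difference constraint (in fact a box constraint), and each clause of (ii) is a \emph{disjunction} $t_C - t_{C'} \neq a^{C'}_{i'} - a^C_i$, i.e.\ a $2$-way branch "$<$ or $>$". The naive obstacle is that there are polynomially many but potentially $n^2$ such disagreement constraints, so branching on all of them is not polynomial.

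The key idea to get around this is that, relative to a fixed \emph{order} of the $c$ starting times (there are only $c!$ of these, a constant for fixed $c$), and for each pair $C, C'$ a guess of the integer $\lfloor (t_C - t_{C'})/1 \rfloor$ region is still too much — instead, I would branch on, for each ordered pair of chains, into which of the $O(n^2)$ "gaps" between consecutive forbidden difference values $a^{C'}_{i'} - a^C_i$ the quantity $t_C - t_{C'}$ falls. Within one chain this is again $c^2$ pairs each with $O(n^2)$ choices, giving $n^{O(c^2)}$ branches, which is XP. In each branch the constraints "$t_C - t_{C'}$ lies strictly between two given consecutive forbidden values" together with the box constraints from (i) form a system of difference constraints with integer bounds; we add $1$ to strict lower bounds and subtract $1$ from strict upper bounds to make them non-strict (legitimate since all quantities are integers), build the constraint graph, and run Bellman--Ford to detect a negative cycle; if none exists, the shortest-path potentials give an integral feasible assignment of starting times, which we output. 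Correctness is immediate: any feasible schedule induces such a gap-assignment, and conversely any solution to the difference system respects all the $\neq$ constraints and the release/deadline windows.

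The main obstacle I expect is controlling the number of branches: one must argue that it suffices to guess, for each of the $O(c^2)$ ordered pairs of chains, only the \emph{interval} of $\mathbb{Z}$ between consecutive forbidden difference values that $t_C - t_{C'}$ occupies — and that there are only $O(n^2)$ such intervals per pair, because the chains have at most $n$ jobs total so there are at most $n^2$ forbidden difference values. One also has to handle, using Lemma~\ref{lemma:unarydeadlines} (or rather its deadline-tightening part, which works regardless of the encoding of delays), the fact that release dates and deadlines could be astronomically large in binary: after tightening, each window has length at most $n(D+1)$ where $D$ is the max delay, and although $D$ itself is exponential in the input size, this does not matter because we never enumerate time steps — we only ever manipulate the $O(c+c^2)$ difference constraints symbolically and solve them with Bellman--Ford, whose running time is polynomial in the number of constraints and the bit-length of the numbers, hence polynomial in the input size. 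Checking that the number of distinct forbidden difference values, and hence branches, is polynomial in $n$ for fixed $c$ — and that each Bellman--Ford call is polynomial — is the crux; the rest is bookkeeping.
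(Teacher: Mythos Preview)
Your proposal is correct and follows essentially the same approach as the paper: branch, for each of the $O(c^2)$ pairs of chains, on which of the $O(n^2)$ intervals between consecutive forbidden difference values the quantity $t_C - t_{C'}$ lands in, and then solve the resulting system of difference constraints via Bellman--Ford. The detours through $c!$ orderings and the deadline-tightening lemma are unnecessary (the paper does not use Lemma~\ref{lemma:unarydeadlines} here, since, as you yourself note, the algorithm never enumerates time steps), but they do no harm.
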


\begin{proof}
Suppose we have chains $C_1, \ldots, C_c$. Suppose chain $C_i$ has jobs $j_{i,1}$, $j_{i,2}$, $\ldots$, $j_{i, \ell_i}$, with $j_{i,a}$ directly preceding $j_{i,a+1}$; we write the exact delay of this constraint as $l_{i,a}$. Write $s(i,a) = \sum_{b=1}^{a-1} (l_{i,b}+1)$. Note that $j_{i,a}$ has to be scheduled exactly
$s(i,a)$ time after $j_{i,1}$ starts.

For each chain $C_i$, we take a variable $x_i$ that denotes the time that the first job of $C_i$ is scheduled. 

\begin{claim}
Variables $x_1, x_2, \ldots, x_c$ give a valid schedule, if and only if the following constraints are fulfilled.
\begin{enumerate}
    \item For each $i$, $x_i \geq r_{C_i}$.
    \item For each $i$, $x_i + s(i, \ell_i) < d_{C_i}$. 
    \item For each $i$ and $i'$ with $i\neq i'$, and each $j$, $j'$ with $1\leq j \leq \ell_i$,
    $1 \leq j' \leq \ell_{i'}$, we have $x_i + s(i,j) \neq x_{i'} + s(i',j')$.
\end{enumerate}
\label{claim:algorithmdifference}
\end{claim}

\begin{proof}
Suppose we have a valid schedule where chain $C_i$ starts at time $x_i$. As the first job in a chain
does not start before the release date, we have $x_i \geq r_{C_i}$. As the last job of the chain
starts at time $x_i + s(i, \ell_i)$, we have $x_i + s(i, \ell_i) < d_{C_i}$. 
If the third condition would not hold for a 4-tuple $i$, $i'$, $j$, $j'$, then both the $j$th job of
chain $C_i$ and the $j'$th job of $C_{i'}$ would be scheduled at time $x_i + s(i,j)$; this gives a conflict as we have only one machine.

The other direction is (also) simple: the first condition ensures that chains do not start before the release date; the second that they finish before the deadline, and the third that no two jobs are scheduled at the same time.
\end{proof}

The first step of the algorithm is to compute for each pair $i$, $i'$ with $i\neq i'$ a set
$U(i,i') = \left\{ s(i',j') - s(i,j)
~|~ 1\leq j \leq \ell_{i},~ 1 \leq j' \leq \ell_{i'} \right\}$.
Note that each of these sets has size $O(n^2)$, or more precisely, is at most the product of the sizes of the two chains.
Now, sort each set $U(i,i')$.

Suppose $U(i,i') = \{a_1, a_2, \ldots, a_r\}$ with $a_1 < a_2 < \cdots < a_r$. Condition 3 of Claim~\ref{claim:algorithmdifference} for the pair $i$, $i'$ can be expressed as
\begin{multline*}
 \left(x_i-x_{i'} < a_1\right) \vee \left(a_1 < x_i-x_{i'} < a_2 \right) \vee \left(a_2 < x_i-x_{i'} < a_3 \right) \vee \cdots \\ \cdots \vee \left(a_{r-1} < x_i-x_{i'} < a_r\right) \vee \left(a_r < x_i-x_{i'}\right) 
\end{multline*}

Our algorithm now branches on these $O(n^2)$ possibilities. For each of the $O(c^2)$ pairs of chains,
we have $O(n^2)$ branches, which gives a total of $O(n^{O(c^2)})$ subproblems.

Each of these subproblems asks to solve a set of inequalities. These inequalities are of the form
$x_i - x_{i'} < a$  or
$x_i \geq a$ (Condition 1 of Claim~\ref{claim:algorithmdifference}) or $x_i \leq a$ (Condition 2 of Claim~\ref{claim:algorithmdifference}), for some integers $a$. As we work with integers and look for integer solutions,
we reformulate constraints of the form $x_i - x_{i'} < a$ as $x_i-x_{i'} \leq a-1$. 
We now have a system of linear inequalities which can be solved in polynomial time with text book (shortest paths) methods,
see e.g., \cite[Section 24.4]{CormenLRS}. If at least one of the subproblems has a solution, then this
solution gives starting times for the chains that gives a valid schedule; otherwise, there is no valid schedule.

We have $O(n^{O(c^2)})$ branches, each taking polynomial time, and this gives a running time of $O(n^{O(c^2)})$.
\end{proof}

\section{Conclusions}
\label{section:conclusions}
In this paper, we have shown a number of results on the parameterized complexity of
{\sc Chain Scheduling with Exact Delays} and {\sc Chain Scheduling with Minimum Delays}.
In a few cases, we obtained $W[1]$-completeness or $W[2]$-completeness; in the other cases,
we only showed hardness results, often together with XP-membership. We expect that the problems, parameterized
by the thickness do not belong to $W[P]$ --- for the same `compositionality' reason as why one can
believe that {\sc Graph Bandwidth} does not belong to $W[P]$: see the discussion in
\cite[Section 4]{FellowsR20}. The machinery to prove such results currently is not available, but
we conjecture that also the variants with minimum delays inhibit some form of compositionality and do
not belong to $W[P]$.

We end this paper with mentioning some open problems. In this paper, we proved for the case that
delays are given in binary, for only one of the cases membership in XP. What is the complexity of
the other cases when delays are given in binary? Also, an interesting question is to study the
variant where we have maximum delays, with all of its subcases.

\end{document}